\setlist{nosep}
\DeclareMathAlphabet{\mathpzc}{OT1}{pzc}{m}{it} 
\definecolor{darkblue}{rgb}{0.05,0.25,0.65}
\definecolor{darkgreen}{RGB}{20,140,10}
\definecolor{lightgray}{rgb}{0.9,0.9,0.9}
\definecolor{darkorange}{RGB}{200,100,5}
\definecolor{darkyellow}{rgb}{.91,.91,0}
\definecolor{orangeii}{RGB}{200,100,5}
\definecolor{lightblue}{RGB}{243, 250, 255}
\newtheorem{theorem}{Theorem}[section]
\newtheorem{lemma}[theorem]{Lemma}
\newtheorem{proposition}[theorem]{Proposition}
\newtheorem{corollary}[theorem]{Corollary}
\theoremstyle{definition}
\newtheorem{definition}[theorem]{Definition}
\newtheorem{example}[theorem]{Example}
\newtheorem{remark}[theorem]{Remark}
\newlength{\dhatheight}
\let\PLAINthebibliography\thebibliography
\renewcommand\thebibliography[1]{
  \PLAINthebibliography{#1}
  \setlength{\parskip}{0.5pt}
  \setlength{\itemsep}{0.5pt plus .3ex}
}
\newcommand{\differential}{\mathrm{d}}
\newcommand\bos[1]{\mathstrut\mkern2.5mu#1\mkern-14mu\raise1.7ex%
  \hbox{$\scriptstyle\rightsquigarrow$}}
\newcommand\bosonic[1]{\mathstrut\mkern2.5mu#1\mkern-14mu\raise1.7ex%
  \hbox{$\scriptstyle\rightsquigarrow$}}
\newcommand{\longsquiggly}{\xymatrix{{}\ar@{~>}[r]&{}}}
\newcommand{\FR}{\mathbb{R}}
\newcommand{\dd}{\mathrm{d}}
\newcommand{\even}{\mathrm{even}}
\newcommand{\odd}{\mathrm{odd}}
\newcommand{\om}{\omega}
\newcommand{\wtom}
{\widetilde{\omega}}
\newcommand{\wtxi}{\widetilde{\xi}}
\newcommand{\CX}{\mathcal{X}}
\newcommand{\frg}{\mathfrak{g}}
\begin{document}

\setlength{\abovedisplayskip}{3pt}
\setlength{\belowdisplayskip}{3pt}
\setlength{\abovedisplayshortskip}{-3pt}
\setlength{\belowdisplayshortskip}{3pt}

\title{
Covariant Lie Derivatives and (Super-)Gravity}

\author{
  Grigorios Giotopoulos${}^{\ast}$
  \;\;
  \;\;
}

\maketitle

\thispagestyle{empty}

\begin{abstract}
The slightly subtle notion of covariant Lie derivatives of \textit{bundle-valued} differential forms is crucial in many applications in physics, notably in the computation of conserved currents in gauge theories, and yet the literature on the topic has remained fragmentary. This note provides a complete and concise mathematical account  of covariant Lie derivatives on a spacetime (super-)manifold $M,$ defined via choices of lifts of spacetime vector fields to principal $G$-bundles over it, or equivalently, choices of covariantization correction terms on spacetime. As an application in the context of (super-)gravity, two important examples of covariant Lie derivatives are presented in detail, which have not appeared in unison and direct comparison: $\bf{(i)}$ The natural covariant Lie derivative relating (super-)diffeomorphism invariance to local translational (super-)symmetry, and $\bf{(ii)}$ the Kosmann Lie derivative relevant to the description of isometries of (super-)gravity backgrounds. Finally, we use the latter to rigorously justify the usage of the traditional (non-covariant) Lie derivative on coframes and associated fields in dimensional reduction scenarios along abelian $G$-fibers, an issue which has thus far remained open for topologically non-trivial spacetimes.
\end{abstract}

\vspace{.1cm}

\begin{center}
\begin{minipage}{13cm}
\tableofcontents
\end{minipage}
\end{center}

\medskip

\vfill

\hrule
\vspace{5pt}

{
\footnotesize
\noindent
\def\arraystretch{1}
\tabcolsep=0pt
\begin{tabular}{ll}
${}^*$\,
&
Mathematics, Division of Science; and
\\
&
Center for Quantum and Topological Systems,
\\
&
NYUAD Research Institute,
\\
&
New York University Abu Dhabi, UAE.  
\end{tabular}
\hfill
\href{https://ncatlab.org/nlab/show/Center+for+Quantum+and+Topological+Systems}{
\adjustbox{raise=-15pt}{
\includegraphics[width=3cm]{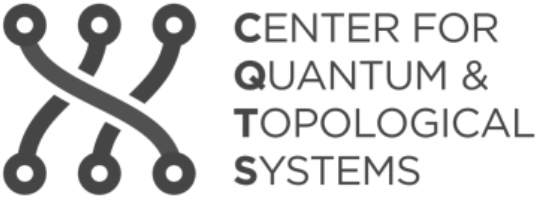}
}
}

\vspace{1mm} 
\noindent 

\vspace{.2cm}

\noindent
The author acknowledges the support by {\it Tamkeen} under the 
{\it NYU Abu Dhabi Research Institute grant} {\tt CG008}.
}

\newpage

\section{Introduction \& Overview}

On spacetime manifolds $X$ with non-trivial topologies, ``matter fields''\footnote{By which we mean all fields transforming in  non-trivial representations of the corresponding gauge groups, such as the Dirac fermion field $\psi_{\mathrm{Dir}}$ of electrodynamics or the coframe $e$ and the gravitino $\psi$ of supergravity (thus excluding the respective gauge fields).} $\phi$ are globally seen as differential forms valued in potentially non-trivial vector bundles (see e.g. \cite{Tu17} for relevant background on differential geometry)
\begin{align}\label{AssociatedVectorBundle}
V_G \, = \,  P\times_G V\, ,
\end{align}
associated to a principal $G$-bundle $\pi : P \rightarrow X$ via some representation of $G$ on a vector space $V$, 
$$\phi \quad \in \quad \Omega^p(X; \, V_G)\, .$$
On a local trivializing cover $\{U_i\hookrightarrow X\}_{i\in I}$ of the bundles, these are represented  by families of locally defined $p$-forms $\{\phi_i \, \in \,  \Omega^p(U_i; \,V) \}_{i\in I}$ valued in the vector space $V$, related on any overlap $U_{ij}=U_i\cap U_j$ by the action of the corresponding transition function $\gamma_{ij} \in \Omega^0(U_{ij}; \, G ) $
$$
\phi_j = \gamma_{ji} \cdot \phi_i \, ,
$$
via the given $G$-representation on $V$. 

Given any vector field $\xi \in \mathcal{X}(X)$, the ordinary Lie derivative operation $L_\xi$ does not yield a well-defined infinitesimal transformation $$ \delta_\xi \phi \quad \in \quad  T_\phi \big(\Omega^p(X; \, V_G)\big)\cong \Omega^p(X; \, V_G) $$ of the fields\footnote{In geometrical terms, it does not define a vector field on the field space $\Omega^p(X; \, V_G)= \Gamma_X(\Lambda^pT^*X \otimes V_G)$ \cite{GS25}.}. That is, the locally defined expression 
$$
L_\xi \phi_i \,  := \, \differential \iota_{\xi} \phi_i + \iota_{\xi} \differential \phi_i $$
is easily seen to be, in general, \textit{non-covariant} under the action of the transition functions
\begin{align}\label{NonCovarianceLieDerivativeMatterField}
L_\xi \phi_j =  L_\xi (\gamma_{ji} \cdot \phi_i) \quad \neq \quad  \gamma_{ji} \cdot L_\xi \phi_i\, ,
\end{align}
and hence the family of locally defined objects $\{L_\xi \phi_i\}_{i\in I}$ does not patch up to a globally defined section of $\Omega^p(X; \, V_G)$, to be interpreted as a tangent vector at $\phi \in  \Omega^p(X; \, V_G)$. 

Similarly,  for any globally defined connection 1-form $\widetilde{\omega} \in \Omega^1(P;\,\mathfrak{g})$ on the total space $P$, such as the spin connection of (super-)gravity, and a family of its local ``gauge field'' representatives $\{\omega_i \, \in \, \Omega^1( U_i;\, \mathfrak{g}) \}_{i\in I}$ on the trivializing neighborhoods  $\{U_i \hookrightarrow X\}_{i\in I}$, it is equally easily seen that, in general, 
\begin{align}\label{NonCovarianceLieDerivativeConnection}
L_\xi (\omega_j) = L_\xi (\gamma_{ji} \cdot \omega_i \cdot \gamma_{ij} + \gamma_{ji} \cdot \differential \gamma_{ij} ) \quad \neq \quad   \gamma_{ji} \cdot L_\xi \omega_i \cdot \gamma_{ij}\, . 
\end{align}
Consequently, the family of locally defined objects $\{L_\xi \omega_i\}_{i\in I}$ does not define a section of $\Omega^1(M;\,P\times_G \mathfrak{g})$, to be interpreted as a tangent vector in  $T_\omega \mathrm{Conn}_G(P)$.

\medskip

\noindent {\bf Issues related to the non-covariance of the Lie derivative}

Despite the above glaring inconsistencies, the majority of literature on theoretical physics tends to ignore this issue. From the mathematical perspective, this can only be justified in severely restrictive scenarios, such as if the theories are considered only in their trivial topological sector. For instance, this is the case (necessarily so) if these theories are defined over a contractible spacetime, such as flat Minkowski spacetime $X\cong \mathbb{R}^{1,d}$. Our overarching motivation for these considerations stems from flux quantization in supergravity \cite{GSS24-SuGra}\cite{GSS-M5Brane}, a construction which yields non-trivial new ``gluing'' information only over non-contractible spacetimes \cite{SS24Flux}\cite{SS23FQ}. As such, the traditional Lie derivative is in general (a priori) inappropriate and requires further justification in each relevant context. A critical example is dimensional reduction in the coframe formalism along abelian $G$-symmetries (cf. Lem. \ref{InvarianceViaKosmannAndNaiveLieDerivativeOfCoframe}, Cor. \ref{InvarianceOfAssociatedFields}), which is essential for our upcoming application in the dimensional reduction of 11D $S^4$-flux quantized super-space supergravity \cite{GSS24-SuGra} to 10D $\mathrm{cyc}(S^4)$-flux quantized IIA supergravity \cite{GS25b}.

Moreover, even on trivial topologies, the usual Lie derivative approach is burdened with further issues from the perspective of dynamics in traditional gauge field theories. This is because the non-covariance with internal gauge transformations (Eq. \eqref{NonCovarianceLieDerivativeMatterField}\eqref{NonCovarianceLieDerivativeConnection}) implies that conserved Noether currents of symmetries generated by spacetime vector fields are \textit{not gauge-invariant}, when computed via the traditional Lie derivative as an infinitesimal transformation of the fields (see e.g. \cite{Ja80} in the context of electromagnetism, and \cite[Eq. 6.8]{OG06} in the context of gravity).

\medskip

\noindent{\bf Resolutions via covariantization corrections and lifts to associated principal G-bundles}

We now survey approaches in the literature that deal with this issue, albeit mostly with a motivation originating from the dynamical considerations of gauge theoretic fields, and highlight their commonalities and discrepancies.

In the context of spacetime symmetries of $G$-gauge fields on fixed gravitational backgrounds, \cite{Ja80} advocates a natural gauge-covariant modification of the Lie derivative of a gauge field $\om$, along any vector field $\xi \in \CX(X)$, by subtracting a specially chosen gauge transformation with parameter $\iota_\xi \omega$
\begin{align*}
L^{\om}_\xi \om \, :=  \,   L_\xi \omega - \dd^\omega(\iota_\xi \omega)   \,=\, \iota_\xi F_\omega  \, ,
\end{align*}
where $F_\omega = \dd \om + \frac{1}{2}[\om,\om]$ is the corresponding curvature 2-form.
The action is extended to associated matter fields via
$$
L_\xi^\om \phi := L_\xi \phi + \iota_\xi \om \cdot \phi \, .
$$
Under Noether's first theorem,   this transformation directly yields the correct gauge-invariant conserved currents, bypassing the otherwise ad-hoc ``Belinfante procedure'' \cite{Be40}.
In \cite{Ja80} these transformations are termed (infinitesimal) ``\textit{gauge-covariant coordinate transformations}''. They often go under the same name in the context of supergravity (see e.g. \cite{VF12}\cite{Or15}), where they are motivated by their appearance in the closure of local supersymmetry transformations.\footnote{Note, rather than the global well-definiteness on non-trivial topologies, the motivation in the above sources is the dynamical consideration of gauge fields.} The relation of this covariant Lie derivative and local supersymmetry is in fact deeper (cf. Sec \ref{SuperDiffeomorphismSymmetrySec}). In \cite{OG06} and \cite{CGRS20} this transformation is termed the \textit{``(natural) covariant Lie derivative''} associated to the gauge field $\om$, a nomenclature we adopt in the current text for its transparent geometrical origin (cf. Ex. \ref{LiftingViaChosenConnection}).

With the purpose of clarifying conservation laws in gravitational theories, \cite{OG06} notices that there are in fact plenty of mathematically acceptable gauge-covariant modifications of the traditional Lie derivative, determined by 0-forms $B_\xi$  valued in the Lie algebra $\mathfrak{g}$, linear in $\xi \in \CX(X)$, which `transform as a gauge field'. Interpreted more precisely, these correspond to families of locally defined $0$-forms $
\big\{(B_{\xi})_i \quad \in \quad  \Omega^0(U_i; \, \mathfrak{g})\big\}_{i\in I}
$ over a trivializing cover $\{U_i \hookrightarrow X\}_{i\in I}$ (cf. Eq. \eqref{CovariantizationTermSum}) related on overlaps as 
$$ (B_{\xi})_j = 
\gamma_{ji} \cdot (B_{\xi})_i \cdot \gamma_{ij} + \gamma_{ji} \cdot\iota_\xi \differential \gamma_{ij} \, . 
$$
The  corresponding covariant Lie derivatives are then given by
$$\widetilde{L}_\xi \phi = L_\xi \phi + B_\xi \cdot \phi \quad \mathrm{and} \quad \widetilde{L}_\xi \omega = L_\xi \omega - \dd^\omega B_\xi \, , $$
respectively.

In the context of first-order Einstein--Cartan gravity, among the myriads of choices \cite{OG06} finds that the ``\textit{Kosmann Lie derivative}'' (therein called the Yano derivative), defined via
$$
(B^K_{\xi})^a{}_{b} \, = \,   - \eta^{ad}\, (L_\xi e)_{[bd]} \quad \in \quad \Omega^{0}(U_i, \, \mathfrak{so}(1,d)\big)
$$
where $ e\quad \in \quad \Omega^{1}(X; \, \FR^{1,d}_{\mathrm{SO}(1,d)})$ is the coframe field (a.k.a. vielbein), a (non-degenerate) $1$-form valued in the vector bundle $\FR^{1,d}_{\mathrm{SO}(1,d)}\cong P \times_{ \mathrm{SO}(1,d)} \FR^{1,d}$ associated to a background $\mathrm{SO}(1,d)$-principal bundle \eqref{AssociatedVectorBundle} via the fundamental representation, recovers in particular the Komar mass formula as a conservation law. The same is noted in \cite{FF09}, where also a geometrical interpretation particular to $\mathrm{SO}(1,d)$-structures and the Kosmann Lie derivative is reviewed (cf. Rem. \ref{AnotherGeometricCharacterization}).
In \cite{JM15} this same covariant Lie derivative is termed the Lorentz-Lie derivative and is employed to recover the black hole entropy as a Noether charge.  Further considerations of black hole mechanics via the Kosmann Lie derivative are pursued in \cite{EMO20}, following \cite{Pr17}, and more recently in super-gravitational contexts in \cite{BO23}\cite{BMO24}\cite{BMO25}. In fact, this covariant Lie derivative has appeared  earlier in \cite{Or02} (under the name Lorentz-Lie derivative), where it is motivated as arising via the commutator of two \textit{Killing spinor} super-symmetry transformations for super-symmetic backgrounds. 

Crucially, the Kosmann Lie derivative satisfies 
$$
L_\xi g = 0 \quad \iff \quad L_\xi^K e = 0 \, ,
$$
where $g= \langle e, e\rangle = \eta_{ab} \, e^a\otimes e^b$ is the corresponding metric tensor and\footnote{Our convention on indices in the coframe basis is given by $L_\xi e^a = (L_\xi e)_b{}^a \, e^b$. Lowering (and raising) indices is via the Minkowski matrix elements $\eta_{ab}$ (and $\eta^{ab}$) of signature $(1,d)$, e.g., $(L_\xi e)_{bd} = (L_\xi e)_{b}{}^a \, \eta_{ad}$. Brackets and square brackets on indices denotes  symmetrization and antisymmetrization,  respectively, e.g., $(L_\xi e)_{[bd]}=\frac{1}{2}\big( (L_\xi e)_{bd}  - (L_\xi e)_{db} \big) $.  }
$$
L_\xi^K e^a  \, = \,   L_\xi e^a -  \eta^{ad} \, (L_\xi e)_{[bd]} \,  e^b \, .
$$ 
This property ensures that the corresponding conservation laws of the metric (2nd-order) and coframe (1st-order) formulations of gravity agree. Indeed, we shall adopt a slightly strengthened version of this as the \textit{defining property} of the Kosmann Lie derivative (Def. \ref{KosmannLift}) and recover the form of the covariantization-correction terms via Prop. \ref{TheKosmannLiftExists}. We follow the nomenclature of the ``Kosmann'' Lie derivative in honor of \cite{Ko72} who originally introduced it as a definition of a Lie derivative on spinors, albeit in an ad-hoc manner, following Lichnerowicz \cite{Li63} for the case of Killing vector fields. We briefly explain geometrically how the latter spinorial Lie derivative is recovered in Rem. \ref{LiftingToSpinCovers}.

In \cite{Pr17} with a motivation of studying gauge field and gravitational configurations with non-trivial topologies in the context of black hole mechanics, and in contrast to the spacetime description of \cite{JM15}, the total space principal G-bundle avatars of the fields are employed (cf. Eqs. \eqref{VectorBundleValuedAreFormsTotalSpaceHorizontalForms} \eqref{LocalGaugeFieldsAreFormsTotalSpaceVerticalForms}). Therein, the fields and their Lagrangian dynamics are all lifted to the corresponding principal $G$-bundle $P$ over the spacetime $M$, with the action of any spacetime vector field $\xi$ computed via the Lie derivative solely on the total space along \textit{arbitrary} vector fields on $P$ (cf. Eqs \eqref{TotalSpaceLieDerivativeMatterField}\eqref{TotalSpaceLieDerivativeConection}) that project to the given vector field $\xi$.  The relation to the spacetime covariant Lie derivatives of \cite{OG06} is left open, except for the sole case of the Kosmann Lie derivative under the (unnecessary) condition of the vector fields being infinitesimal isometries of the corresponding metric.

The purpose of this text is thus threefold: {\bf (i)} Providing a clear mathematical description encompassing all the above descriptions and their relation (Sec. \ref{CovariantizedActionOfSpacetimeVectorFields}); {\bf (ii)} Expanding on two different examples of covariant Lie derivatives of importance, both appearing in the context of a single theory, that of (super-)gravity (Sec. \ref{SuperDiffeomorphismSymmetrySec}, \ref{TheKosmannLieDerivativeAndIsometriesOfSuperSpacetimeBackgrounds}); {\bf(iii)} And lastly, justifying the usage of the (non-covariant) Lie derivative in the context of dimensional reduction along abelian $G$-fibers in the coframe formalism of (super-)gravitational backgrounds (Sec. \ref{IsometryInTheCoframeFormalismAndKaluzaKleinDimensionalReduction}).  


\medskip




\section{Covariantized action of spacetime vector fields}\label{CovariantizedActionOfSpacetimeVectorFields}
We now describe how to properly define infinitesimal transformations $\delta_{\xi} \phi, \, \delta_{\xi} \omega$ for matter fields $\phi$ and gauge fields $\om$ along any spacetime vector field $\xi \in \mathcal{X}(X)$. We expand on the geometric foundation of this construction via the total space of the underlying principal $G$-bundle $P$, while also pinpointing the space of choices involved, and at the same time making explicit the relation to the local formulas appearing in the literature when computing such ``covariant Lie derivatives'' directly on the base spacetime. The basic concepts and facts we appeal to are completely standard in principal $G$-bundle theory within the category of smooth manifolds (see e.g. \cite[Ch. 6]{Tu17}). In fact, they are formally exactly the same within the category of super manifolds\footnote{Modulo certain technical details which we shall not enter into here. For instance, all field theoretic statements here should really involve $\Sigma$-parametrized families of sections of super-bundles and connections 1-forms, where $\Sigma$ is an arbitrary ``probe'' supermanifold. In other words, in the context of fermionic fields one must implicitly consider $\Sigma$-plots of fields as per \cite{Schreiber24}\cite{Gi25}\cite{GSS24-SuGra}\cite{GSS26}(following \cite{dcct}, in turn in the vein of \cite{Grothendieck73}).} (see e.g. \cite{Eder21} for a rigorous account), as  necessary towards the applications to super-gravity via its super-spacetime formulation in Sec. \ref{CovariantLieDerivativesOfSupergravitySec}. Thus, for the sake of brevity, we shall refrain from using the adjective ``super'' throughout this section. 

The crux of the construction is the canonical bijection of $p$-forms on the base $X$ valued in an associated vector bundle $V_G$, and horizontal $G$-equivariant $p$-forms on the total space $P$ valued in the vector space $V$\footnote{A $p$-form $\omega$ is horizontal if $\iota_Z\omega=0$ for all vertical vector fields $Z\in \Gamma_P(VP)\hookrightarrow \CX(P)$. It is $G$-equivariant if $\rho_{g}^*\om = g^{-1} \triangleright \omega$ where $\rho:P\times G\rightarrow P$ is the right action of $G$ on $P$ and $\triangleright: G\times V\rightarrow V$ is the left (linear) action on $V$.}
\begin{align}\label{VectorBundleValuedAreFormsTotalSpaceHorizontalForms}
\big\{ \phi \, \in \, \Omega^{p}(X;\, V_G)\big\} \quad \cong \quad \big\{ \widetilde{\phi} \, \in \, \Omega^{p}(P;\, V) 
\big\}_{\mathrm{Hor}, G\mbox{-}\mathrm{equiv}} \, .
\end{align}
We recall that in terms of the families of  local representatives $\{\phi_i \, \in \, \Omega^p(U_i; \, V)\, \vert \, \phi_j = \gamma_{ji}\cdot  \phi_i\}_{i,j\in I}$ of each $\phi$ over local trivializing covers of the bundles, this bijection is realized by pulling back to the base via the canonical local section $\sigma_i: U_i \rightarrow P$ associated to any trivialization $P|_{U_i}\cong U_i \times G$ 
$$
\{\phi_i \}_{i\in I}\equiv \{\sigma_i^* \widetilde{\phi} \}_{i\in I}\quad  \longmapsfrom \quad \widetilde{\phi} \, .
$$

On the other hand, families of representative gauge fields on $X$ over local trivializing covers are in canonical bijection with  connections on $P$, i.e., $G$-equivariant forms valued in the Lie algebra $\mathfrak{g}$
\begin{align}\label{LocalGaugeFieldsAreFormsTotalSpaceVerticalForms}
\big\{ \{\omega_i \, \in \, \Omega^1( U_i; \, \mathfrak{g}) \, \vert \, \omega_j= \gamma_{ji} \cdot \omega_i \cdot \gamma_{ij} + \gamma_{ji} \cdot \differential \gamma_{ij}  \}_{i,j\in I}\big\} \quad \cong \quad \mathrm{Conn}_G(P)\longhookrightarrow 
\big\{\widetilde{\omega} \, \in \, \Omega^{1}(P;\, \mathfrak{g}) \big\}_{G\mbox{-}\mathrm{equiv}} \, .
\end{align}
which furthermore satisfy $\om(A^\#)= A$ for all fundamental vector fields $A^\# \in \CX(P)$ generated by any Lie algebra element $A\in \frg$. With respect to the associated splitting 
$$TP \quad \cong_{\widetilde{\omega}}\quad HP \oplus VP\, , $$
with $HP= \mathrm{ker}(\widetilde{\om})\cong \pi^*TM$, each connection form is in particular a \textit{vertical} 1-form. As with the case of horizontal and $G$-equivariant $p$-forms \eqref{VectorBundleValuedAreFormsTotalSpaceHorizontalForms}, this bijection is realized by pulling back via canonical local sections associated to local trivializations of the bundle
$$
\{\om_i \}\equiv \{\sigma_i^* \widetilde{\om} \}_{i\in I}\quad  \longmapsfrom \quad \widetilde{\om} \, .
$$
\medskip
\noindent{\bf The total space Lie derivative}

Under these canonical identifications for the fields the traditional notion of Lie derivatives makes sense, with the caveat that this is now taken \textit{along vector fields on $P$} where we are instead dealing with \textit{globally defined forms valued in plain vector spaces}. More explicitly, for any \textit{$G$-invariant vector field}\footnote{These constitute the Lie algebra corresponding to the automorphism group of $P$.} $Z\in \mathcal{X}(P)_G$ we have
\begin{align}\label{TotalSpaceLieDerivativeMatterField}
L_{Z} \widetilde{\phi}&=\dd \iota_{Z} \widetilde{\phi} + \iota_Z \dd \widetilde{\phi} \nonumber \\
&= \dd \iota_{Z} \widetilde{\phi} -\iota_{Z}( \widetilde{\omega}\wedge\widetilde{\phi}) +\iota_{Z} (\widetilde{\omega}\wedge\widetilde{\phi})  + \iota_Z \dd \widetilde{\phi}\\
&= \iota_Z \dd^{\widetilde{\omega}}\widetilde{\phi} + \dd^{\widetilde{\omega}}(\iota_{Z} \widetilde{\phi}) - \iota_Z \widetilde{\omega} \cdot \widetilde{\phi} \nonumber
\end{align}
and similarly 
\begin{align}\label{TotalSpaceLieDerivativeConection}
L_Z \widetilde{\omega} &= \dd \iota_{Z} \widetilde{\omega} + \iota_Z \dd \widetilde{\omega} \nonumber \\ 
&= \dd \iota_{Z} \widetilde{\omega} -\frac{1}{2} \iota_{Z} [\widetilde{\omega}, \widetilde{\omega}]+\frac{1}{2} \iota_{Z} [\widetilde{\omega}, \widetilde{\omega}] + \iota_Z \dd \widetilde{\omega} \\
&= \iota_Z F_{\widetilde{\omega}} + \dd^{\widetilde{\omega}}(\iota_{Z} \widetilde{\omega})
\,,\nonumber  
\end{align}
where $F_{\widetilde{\omega}}$ is the corresponding (horizontal) curvature 2-form and $\dd^{\widetilde{\omega}} (\iota_Z \widetilde{\omega})$, $\dd^{\widetilde{\omega}} \widetilde{\phi}$ are the covariant derivatives of the corresponding horizontal $G$-equivariant forms. Notice that each of the terms in the final expressions is manifestly horizontal and $G$-equivariant for each $G$-invariant vector field Z, i.e., 
\begin{align*}
L_Z \widetilde{\phi} \quad &\in \quad  \Omega^{1}(P; \, V)_{\mathrm{Hor}, G\mbox{-}\mathrm{equiv}}\\
L_Z \widetilde{\omega} \quad &\in \quad \Omega^{1}(P; \, \mathfrak{g})_{\mathrm{Hor}, G\mbox{-}\mathrm{equiv}}
 \, .
\end{align*}
It follows that under the canonical bijection of \eqref{VectorBundleValuedAreFormsTotalSpaceHorizontalForms}, pulling down via local sections, these yield infinitesimal transformations of the original spacetime fields
\begin{align*}
\delta_{Z} \phi \quad &\in \quad \Omega^1(X; \, V_G) \, \cong \, T_\phi \big(\Omega^1(X; \, V_G)\big) 
\\
\delta_Z \omega \quad &\in \quad \Omega^1(X, \, P\times_G\mathfrak{g}) \, \cong \,  T_\omega \big(\mathrm{Conn}_G(P)\big)
\, .
\end{align*}

Summarizing, the Lie derivative along $G$-invariant vector fields on the total space $P$ does indeed define infinitesimal transformations of gauge theoretic fields. The remaining task then is to associate a $G$-invariant lift $\widetilde{\xi}\in \CX(P)_G$ to any vector field on spacetime $\xi \in \mathcal{X}(X)$, in that
\begin{align}\label{LiftingCondition}
\pi_{*p} (\widetilde{\xi}_p) \, = \, \xi_{\pi(p)} \qquad \mathrm{and} \qquad (\rho_g)_{*p}(\widetilde{\xi}_p) = \widetilde{\xi}_{p\cdot g}, \, ,
\end{align}
for all $p\in P$, where $\pi : P\rightarrow X $ is the bundle projection and $\rho: P\times G\rightarrow P$ is the right $G$-action on the total space. The lift to total space vector fields should be $\FR$-\textit{linear}
$
\widetilde{\xi_1+\xi_2}\, = \,  \widetilde{\xi_1} + \widetilde{\xi_2}
$
so that moreover 
$$
\delta_{\widetilde{\xi_1+\xi_2}}\phi\, = \, \delta_{\widetilde{\xi}_1} \phi+ \delta_{\widetilde{\xi}_2} \phi\,, 
$$
as required for a notion of infinitesimal transformations of fields, parametrized by vector fields on the base. \footnote{Note that the latter property defines, in turn, a morphism of $\FR$-vector spaces $$
\delta_{\widetilde{(-)}} \, : \, \CX(X) \longrightarrow \CX(\mathcal{F})
$$
from vector fields on $X$ into vector fields on either field space $\mathcal{F} = \Omega^{1}(P, \, V)_{\mathrm{Hor}, G\mbox{-}\mathrm{equiv}}$ or $\mathcal{F}  = \mathrm{Conn}_G(P)\hookrightarrow \Omega^{1}(P, \, \mathfrak{g})_{G\mbox{-}\mathrm{equiv}}$. In local field theory such a map should have image in \textit{local vector fields} $\CX_\mathrm{loc}(\mathcal{F})\hookrightarrow \CX(\mathcal{F})$, which enjoy a Lie algebra structure (\cite[Def. 6.7]{GS25}). Nevertheless, the map need not be a morphism of Lie algebras.}
The lift $\widetilde{(-)}:\CX(X)\rightarrow \CX(P)_G$ constitutes precisely a \textit{choice} of covariantization for the action of spacetime vector fields.
\begin{definition}[\bf Covariant Lie derivative]\label{CovariantLieDerivative}
Let $P\rightarrow X$ be a principal $G$-bundle, and 
\begin{align*}
\widetilde{(-)}\, : \, \mathcal{X}(X) &\longrightarrow \mathcal{X}(P)_G \\
\xi &\longmapsto \widetilde{\xi}
\end{align*}
a \textit{chosen} lifting \eqref{LiftingCondition} $\mathbb{R}$-linear map of spacetime vector fields into $G$-invariant vector fields on $P$.
The \textit{covariant Lie derivatives associated to the lift $\widetilde{(-)}$} 
of any matter field $\phi$ and connection $\omega$ on X along a spacetime vector field $\xi \in \CX(X)$, are defined as the sections 
$$\widetilde{L}_{\xi} \phi \quad \in \quad \Omega^p(X; \, V_G)
\quad \quad \mathrm{and} \quad \quad \widetilde{L}_{\xi} \omega \quad \in \quad \Omega^1(X; \, P\times_G\mathfrak{g}) 
$$
corresponding, via \eqref{VectorBundleValuedAreFormsTotalSpaceHorizontalForms}, to the Lie derivatives on the total space
$$
L_{\widetilde{\xi}} \widetilde{\phi} \quad \in \quad  \Omega^{p}(P; \, V)_{\mathrm{Hor}, G\mbox{-}\mathrm{equiv}} \quad \quad \mathrm{and} \quad \quad
 L_{\widetilde{\xi}} \widetilde{\omega} \quad \in \quad \Omega^{1}(P; \, \mathfrak{g})_{\mathrm{Hor}, G\mbox{-}\mathrm{equiv}}\, ,
$$
respectively. In terms of local representatives, this means that 
\begin{align}\label{PullingBackTotalSpaceCovariantLieDerivative}
(\widetilde{L}_{\xi} \phi)_i \, := \, \sigma_i^* (L_{\widetilde{\xi}} \widetilde{\phi}) \quad \quad \mathrm{and} \quad \quad   (\widetilde{L}_{\xi} \omega)_i \, := \, \sigma_i^* (L_{\widetilde{\xi}} \widetilde{\omega}) \, ,
\end{align}
for any local section $\sigma_i : U_i \rightarrow P$ associated to a local trivialization of the bundle.
\end{definition}
Although the above definition is rigorous and fully general, it does not provide an explicit calculational method nor a relation to the traditional would-be Lie derivative on X. To that end:
\begin{lemma}[\bf Covariant Lie derivative on the base via a connection]\label{CovariantLieDerivativeOnThebAseViaAConnection}
Let $\widetilde{(-)}\, : \, \mathcal{X}(X) \longrightarrow \mathcal{X}(P)_G $ be an $\FR$-linear lift and $\wtom$ any connection on $P$ with associated splitting
$$TP \quad \cong_{\widetilde{\omega}}\quad HP \oplus VP\, , $$
where $HP:= \mathrm{ker}(\widetilde{\om})\cong \pi^*TM$ is the induced horizontal subbundle. Denote the induced  decomposition of any lifted $G$-invariant vector field by
$$\widetilde{\xi}\,  = \, \wtxi_H  + \wtxi_V \quad \in \quad \Gamma_P(HP)\oplus \Gamma_P(VP)\, ,$$
and define 
$$ \widetilde{\lambda}^\omega_{\xi}\, :=\, \iota_{\widetilde{\xi}} \widetilde{\om} \equiv \iota_{\widetilde{\xi}_V} \widetilde{\om} \quad \in \quad \Omega^0(P;\, \mathfrak{g})_{G\mbox{-}\mathrm{equiv}}\,  . $$

Denoting its corresponding ``infinitesimal gauge transformation''
on the base spacetime, via \eqref{VectorBundleValuedAreFormsTotalSpaceHorizontalForms}, by
$$ \lambda^\omega_{\xi} \quad \in \quad \Omega^0(X;\, P\times_G \mathfrak{g}) \, , 
$$
then the corresponding covariant Lie derivatives (Def. \ref{CovariantLieDerivative}) of the spacetime fields $\phi$ and $\omega$ on $X$ are given, equivalently, by
\begin{align}\label{CovariantLieDerivativerOfMatterFieldOnBase}
\widetilde{L}_{\xi}\phi \, &= \, \iota_\xi \dd^\omega \phi + \dd^\omega(\iota_\xi \phi) - \lambda_\xi^\omega \cdot \phi\\
&= \, L_\xi \phi + (\iota_\xi \omega - \lambda_\xi^\omega ) \cdot \phi \nonumber \, ,
\end{align}
where the latter term employs the induced action of the adjoint Lie algebra $P\times_G \frg $ on the associated vector bundle $V_G$ \eqref{AssociatedVectorBundle}, 
and
\begin{align}\label{CovariantLieDerivativerOfConnectionOnBase}
\widetilde{L}_{\xi}\omega \, &= \, \iota_\xi F_\omega + \dd^\omega( \lambda_\xi^\omega) 
\\
&= L_\xi \om - \dd^\omega(\iota_\xi \omega - \lambda_\xi^\omega)
\,.
\nonumber 
\end{align}
\end{lemma}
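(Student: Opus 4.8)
The plan is to reduce everything to the total-space Lie-derivative formulas \eqref{TotalSpaceLieDerivativeMatterField} and \eqref{TotalSpaceLieDerivativeConection}, already established for an arbitrary $G$-invariant vector field $Z \in \CX(P)_G$, by specializing to $Z = \wtxi$ and then pulling back along the canonical local sections $\sigma_i$ as dictated by \eqref{PullingBackTotalSpaceCovariantLieDerivative}. Setting $Z = \wtxi$ and recognizing $\iota_{\wtxi}\wtom = \widetilde{\lambda}^\om_\xi$ by definition, the total-space expressions read $L_{\wtxi}\wtphi = \iota_{\wtxi}\dd^{\wtom}\wtphi + \dd^{\wtom}(\iota_{\wtxi}\wtphi) - \widetilde{\lambda}^\om_\xi \wedge \wtphi$ and $L_{\wtxi}\wtom = \iota_{\wtxi}F_{\wtom} + \dd^{\wtom}(\widetilde{\lambda}^\om_\xi)$. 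Since the right-hand sides of \eqref{CovariantLieDerivativerOfMatterFieldOnBase}--\eqref{CovariantLieDerivativerOfConnectionOnBase} are manifestly globally-defined operations on $X$ (valued in the relevant associated bundles), it suffices to check that pulling these total-space expressions back term-by-term along $\sigma_i$ reproduces exactly the claimed base formulas; global equality then follows from the local one.

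The technical heart of the argument is the behavior of the interior products under $\sigma_i^*$. Since $\wtxi$ is not $\sigma_i$-related to $\xi$ — the two differ by the vertical vector field $(\sigma_i)_*\xi - \wtxi_H$ along the image of the section — one cannot naively commute $\iota_{\wtxi}$ past $\sigma_i^*$. The resolution is that every form contracted against $\wtxi$ in the first two terms, namely $\dd^{\wtom}\wtphi$ and $\wtphi$ itself (respectively $F_{\wtom}$ for the connection), is \emph{horizontal}: on a horizontal form the vertical part of $\wtxi$ drops out, so $\iota_{\wtxi}$ may be replaced by $\iota_{\wtxi_H}$, and $\wtxi_H \in \Gamma_P(HP) \cong \pi^*TM$ is precisely the horizontal lift of $\xi$. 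For a horizontal form $\alpha$ this yields the clean identity $\sigma_i^*(\iota_{\wtxi_H}\alpha) = \iota_\xi \sigma_i^*\alpha$, because along $\sigma_i(U_i)$ the horizontal lift and $(\sigma_i)_*\xi$ again differ only by a vertical vector, which $\alpha$ annihilates. Combining this with the standard facts that $\sigma_i^*$ intertwines covariant derivatives, $\sigma_i^*\dd^{\wtom} = \dd^{\om}\sigma_i^*$, that $\sigma_i^*\wtphi = \phi_i$, $\sigma_i^*\wtom = \om_i$ and $\sigma_i^*F_{\wtom} = F_\om$, and that $\sigma_i^*\widetilde{\lambda}^\om_\xi = \lambda^\om_\xi$ by definition, each term pulls back to its counterpart in \eqref{CovariantLieDerivativerOfMatterFieldOnBase}--\eqref{CovariantLieDerivativerOfConnectionOnBase}. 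Note in particular that for the single non-horizontal contraction $\iota_{\wtxi}\wtom = \iota_{\wtxi_V}\wtom$ it is instead only the \emph{vertical} part that survives, reproducing the chosen covariantization term $\lambda^\om_\xi$.

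Finally, I would obtain the second equalities purely algebraically on the base, with no further reference to $P$. Expanding $\dd^\om\phi = \dd\phi + \om\wedge\phi$ and using the Cartan formula $L_\xi = \dd\iota_\xi + \iota_\xi\dd$ together with the graded Leibniz rule $\iota_\xi(\om\wedge\phi) = (\iota_\xi\om)\wedge\phi - \om\wedge\iota_\xi\phi$ collapses $\iota_\xi\dd^\om\phi + \dd^\om(\iota_\xi\phi)$ to $L_\xi\phi + (\iota_\xi\om)\wedge\phi$, giving the second line of \eqref{CovariantLieDerivativerOfMatterFieldOnBase}. For the connection, substituting $F_\om = \dd\om + \tfrac12[\om,\om]$, the identity $\iota_\xi[\om,\om] = 2[\iota_\xi\om,\om]$, and the adjoint covariant derivative $\dd^\om(\iota_\xi\om) = \dd(\iota_\xi\om) + [\om,\iota_\xi\om]$ shows $L_\xi\om = \iota_\xi F_\om + \dd^\om(\iota_\xi\om)$, whence $\widetilde{L}_\xi\om = \iota_\xi F_\om + \dd^\om\lambda^\om_\xi = L_\xi\om - \dd^\om(\iota_\xi\om - \lambda^\om_\xi)$. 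The main obstacle I anticipate lies entirely in the second paragraph: correctly justifying the exchange of $\iota_{\wtxi}$ and $\sigma_i^*$ through the horizontality of the contracted forms, and tracking that the horizontal part $\wtxi_H$ is genuinely the connection's horizontal lift; the remaining manipulations are routine Cartan-calculus bookkeeping, modulo care with the graded-bracket signs.
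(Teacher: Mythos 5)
Your proposal is correct and follows essentially the same route as the paper's proof: specialize the total-space formulas \eqref{TotalSpaceLieDerivativeMatterField}--\eqref{TotalSpaceLieDerivativeConection} to $Z=\wtxi$, use the splitting $\wtxi = \wtxi_H + \wtxi_V$ together with horizontality of $\wtphi$, $\dd^{\wtom}\wtphi$, $F_{\wtom}$ (and verticality of $\wtom$) to identify the contractions, pull back along $\sigma_i$, and then derive the second equalities by Cartan calculus on the base. Your second paragraph in fact supplies a detail the paper leaves implicit --- the justification that $\sigma_i^*(\iota_{\wtxi_H}\alpha) = \iota_\xi\,\sigma_i^*\alpha$ for horizontal $\alpha$, via the observation that $(\sigma_i)_*\xi$ and $\wtxi_H$ differ by a vertical vector along $\sigma_i(U_i)$ --- which is a worthwhile addition rather than a deviation.
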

\begin{proof}
By \eqref{PullingBackTotalSpaceCovariantLieDerivative} the covariant Lie derivative expressions on X are given by pulling back those of the total space P from \eqref{TotalSpaceLieDerivativeMatterField} and \eqref{TotalSpaceLieDerivativeConection}, both being expressed as a sum of individually horizontal terms. The first equalities of the statement follow by the horizontality and verticality of the forms contracting the lifted vector field $\wtxi=\wtxi_H + \wtxi_V$, namely
\begin{align*}
\iota_{\wtxi} \wtom &= \iota_{\wtxi_H} \wtom + \iota_{\wtxi_V} \wtom = 0 + \iota_{\wtxi_V} \wtom  \\
&=: \widetilde{\lambda}_\xi^\omega 
\end{align*}
while 
\begin{align*}
\iota_{\wtxi} \widetilde{\phi} = \iota_{\wtxi_H} \widetilde{\phi} \qquad ,  \qquad \iota_{\wtxi} \dd^{\widetilde{\om}} \widetilde{\phi} = \iota_{\wtxi_H} \dd^{\widetilde{\om}} \widetilde{\phi} \quad \quad \mathrm{and} \qquad 
 \iota_{\wtxi} F_{\wtom} = \iota_{\wtxi_H} F_{\wtom} \, . 
\end{align*}
The latter equalities follow by expanding the resulting base spacetime covariant derivatives and curvature formulas, and then identifying the Cartan formula for the traditional Lie derivative $L_\xi= \dd \iota_\xi + \iota_\xi \dd$.
\end{proof}
It is worthwhile to expand on the two different expressions of the covariant Lie derivatives detailed above. Each term in the first equalities of \eqref{CovariantLieDerivativerOfMatterFieldOnBase} and \eqref{CovariantLieDerivativerOfConnectionOnBase} is a globally defined section of the corresponding bundle, hence guaranteeing the global existence of the covariant Lie derivative -- being simply their linear combination. On the other hand, each term in the second equalities of \eqref{CovariantLieDerivativerOfMatterFieldOnBase} and \eqref{CovariantLieDerivativerOfConnectionOnBase} is non-covariant, hence (individually) only defined locally. Nevertheless, they are so precisely such that the latter terms exactly cancel the inherent non-covariance of the traditional Lie derivative. In more detail, the ``correction'' term 
\begin{align}\label{CovariantizationTermSum}
B_{\xi}^\omega\, :=\, \iota_\xi \omega - \lambda_\xi^\omega
\end{align} 
is not globally defined on $X$, but rather represents a family of locally defined 0-forms
$$
\big\{(B_{\xi}^\omega)_i \quad \in \quad  \Omega^0(U_i; \, \mathfrak{g})\big\}_{i\in I}
$$
over some trivializing cover $\{U_i \hookrightarrow X\}$ of the bundle, linear in $\xi$, and related on overlaps as
$$
(B^\omega_\xi)_j= \gamma_{ji} \cdot (B^\omega_\xi)_i \cdot \gamma_{ij} + \gamma_{ji} \cdot \iota_\xi \differential \gamma_{ij} \, .
$$

This justifies naming the construction as a ``covariantization'' of the traditional Lie derivative from the perspective of the base spacetime X. Of course, since Def. \ref{CovariantLieDerivative} of the covariant Lie derivative does not involve any choice of connection, the following is immediate.
\begin{corollary}[\bf Covariantization term is independent of connection]\label{CovariantizationTermIsIndependentOfConnection}
Given an $\FR$-linear lift $\widetilde{(-)}\, : \, \mathcal{X}(X) \longrightarrow \mathcal{X}(P)_G$, the associated covariantization correction term \eqref{CovariantizationTermSum} 
of the traditional Lie derivative in Eqs. \eqref{CovariantLieDerivativerOfMatterFieldOnBase} and \eqref{CovariantLieDerivativerOfConnectionOnBase} on the spacetime $X$ is independent of the chosen connection, in that
$$
\iota_\xi \omega - \lambda^\omega_\xi \, = \iota_\xi \hat{\omega} - \lambda^{\hat{\omega}}_\xi \, . 
$$
\end{corollary}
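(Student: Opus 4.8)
The plan is to verify the claimed equality by a short, representation-free computation on the total space $P$, exploiting the fact that by \eqref{PullingBackTotalSpaceCovariantLieDerivative} the covariantization term $B^\omega_\xi = \iota_\xi\omega - \lambda^\omega_\xi$ is a pullback, along the local sections $\sigma_i$, of a quantity assembled purely from the (connection-independent) lift $\widetilde\xi$ and the connection $\wtom$. First I would unwind both summands at a point $x \in U_i$. Since $\lambda^\omega_\xi$ corresponds via \eqref{VectorBundleValuedAreFormsTotalSpaceHorizontalForms} to the $\frg$-valued function $\iota_{\widetilde\xi}\wtom$ on $P$, its local representative is $(\lambda^\omega_\xi)_i(x) = \wtom_{\sigma_i(x)}\big(\widetilde\xi_{\sigma_i(x)}\big)$; and since $\omega_i = \sigma_i^*\wtom$, one has $(\iota_\xi\omega)_i(x) = \wtom_{\sigma_i(x)}\big(\sigma_{i\ast}\xi_x\big)$, where $\sigma_{i\ast}\xi_x := T\sigma_i(\xi_x)$. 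Hence
\begin{align*}
(B^\omega_\xi)_i(x) \;=\; \wtom_{\sigma_i(x)}\big(\sigma_{i\ast}\xi_x - \widetilde\xi_{\sigma_i(x)}\big)\,.
\end{align*}

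The key geometric step is then the observation that the argument of $\wtom$ above is \emph{vertical}, independently of any connection: both $\sigma_{i\ast}\xi_x$ and $\widetilde\xi_{\sigma_i(x)}$ are mapped by $\dd\pi$ to $\xi_x$ --- the former because $\pi\circ\sigma_i = \mathrm{id}_{U_i}$, the latter because $\widetilde\xi$ is a lift of $\xi$ --- so their difference lies in $V_{\sigma_i(x)}P$ and may be written as $A^\#$ for a unique $A \in \frg$ depending on $x$ and $i$ but on no connection. Since every connection $1$-form reproduces the generator of a fundamental vector field, $\wtom(A^\#) = A$, I conclude that $(B^\omega_\xi)_i(x) = A$ is manifestly independent of the choice of $\wtom$. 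As this holds for every $i$ and every $x \in U_i$, the locally assembled correction terms coincide, which is exactly the asserted identity $\iota_\xi\omega - \lambda^\omega_\xi = \iota_\xi\hat\omega - \lambda^{\hat\omega}_\xi$.

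The computation is short and its only delicate point --- the part I would flag as the main obstacle to write carefully rather than a deep difficulty --- is the correct bookkeeping of the pullback-interior-product interplay: identifying the base-space contraction $\iota_\xi\omega_i$ with the total-space evaluation $\wtom(\sigma_{i\ast}\xi)$, and then checking that the two candidate lifts $\sigma_{i\ast}\xi$ and $\widetilde\xi$ differ by a vertical vector. Once this is in place the conclusion rests entirely on the normalization $\wtom(A^\#)=A$ shared by all connections. For completeness I would also record the conceptual shortcut the corollary alludes to: since Def.~\ref{CovariantLieDerivative} fixes $\widetilde L_\xi\phi$ and $\widetilde L_\xi\omega$ with no reference to an auxiliary connection, while $L_\xi$ acting on local representatives is likewise connection-free, the decompositions \eqref{CovariantLieDerivativerOfMatterFieldOnBase}--\eqref{CovariantLieDerivativerOfConnectionOnBase} force $B^\omega_\xi \wedge \phi$ to be connection-independent for every field $\phi$, and probing with a faithful representation recovers the statement; the direct computation above is preferable precisely because it pins down $B^\omega_\xi$ itself without any representation-theoretic input.
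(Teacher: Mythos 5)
Your proof is correct, but it takes a genuinely different route from the paper's. The paper disposes of the corollary in one sentence: since Def.~\ref{CovariantLieDerivative} defines $\widetilde{L}_\xi$ with no reference to any connection, and the traditional $L_\xi$ on local representatives is likewise connection-free, the decompositions of Lem.~\ref{CovariantLieDerivativeOnThebAseViaAConnection} leave no room for the correction term to depend on $\omega$ --- this is exactly the ``conceptual shortcut'' you record at the end, together with its weak point, namely that literally it only shows $B^\omega_\xi \wedge \phi$ is connection-independent, and stripping off $\phi$ requires some injectivity/faithfulness of the action. Your main argument is instead a direct pointwise computation upstairs: $(B^\omega_\xi)_i(x) = \wtom_{\sigma_i(x)}\big(\sigma_{i\ast}\xi_x - \wtxi_{\sigma_i(x)}\big)$, where the argument of $\wtom$ is vertical because both vectors push forward to $\xi_x$ under $\dd\pi$ (for $\sigma_{i\ast}\xi_x$ since $\pi \circ \sigma_i = \mathrm{id}_{U_i}$, for $\wtxi$ because a lifting map is by definition $\pi$-related to $\xi$), hence equals $A^{\#}_{\sigma_i(x)}$ for a unique $A \in \frg$ determined without reference to any connection; the normalization $\wtom(A^{\#}) = A$, shared by all connections, finishes the proof. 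This buys two things the paper's one-liner does not make explicit: it needs no representation-theoretic input at all, and it identifies $B^\omega_\xi$ intrinsically as the $\frg$-valued function measuring the vertical discrepancy between the section pushforward $\sigma_{i\ast}\xi$ and the chosen lift $\wtxi$, which in turn makes the overlap patching law for $B_\xi$ in Prop.~\ref{EquivalentWaysOfDefiningACovariantLieDerivative} transparent. The one hypothesis you use implicitly, $\dd\pi \circ \wtxi = \xi \circ \pi$, is indeed contained in the paper's notion of a ``lifting'' map (it is already needed for Lem.~\ref{CovariantLieDerivativeOnThebAseViaAConnection} to hold), so your argument is complete as stated.
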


\begin{example}[\bf Natural lift of chosen connection]\label{LiftingViaChosenConnection}
Given a connection $\wtom \in \Omega^1(P;\, \mathfrak{g})$, there exists a uniquely associated \textit{horizontal} lift (see e.g. Prop. 28.6 \cite{Tu17})
\begin{align*}
\widetilde{(-)}_{\widetilde{\om}}\, : \, \mathcal{X}(X) &\longrightarrow \Gamma_P(HP)\hookrightarrow \mathcal{X}(P)_G \\
\xi &\longmapsto \widetilde{\xi}_H + 0 \, .
\end{align*}
With this choice of lift, one has $\lambda_\xi^\omega=0$ and so the associated covariant Lie derivatives from Lem. \ref{CovariantLieDerivativeOnThebAseViaAConnection} take the form
\begin{align*}
L^{\om}_\xi \phi \, :&= \,  \iota_\xi \dd^\omega \phi + \dd^\omega(\iota_\xi \phi) \\
&=\,  L_\xi \phi + \iota_\xi \omega \cdot \phi \nonumber
\end{align*}
and
\begin{align*}
L^{\om}_\xi \om \, :&= \,  \iota_\xi F_\omega \\
&=\, L_\xi \omega - \dd^\omega(\iota_\xi \omega)\, ,  
\end{align*} thus recovering the  ``\textit{gauge-covariant (infinitesimal) coordinate transformation}'' or ``\textit{natural covariant Lie derivative}'' from \cite{Ja80}\cite{VF12} and \cite{OG06}\cite{CGRS20}, respectively. Notice, since the curvature of the connection $F_{\widetilde{\omega}}$ measures the non-integrability of horizontal distribution, $[\widetilde{\xi_1}_H,\, \widetilde{\xi_2}_H] = \widetilde{[\xi^1 ,\,  \xi^2]}_H - \iota_{\widetilde{\xi_1}_H} \iota_{\widetilde{\xi_2}_H} F_{\widetilde{\omega}}$, it follows that the corresponding covariant Lie derivatives define Lie algebra actions if any only if the connection is \textit{flat}.

Moreover, it is immediately apparent that an arbitrary covariant Lie derivative \eqref{CovariantLieDerivativerOfMatterFieldOnBase}\eqref{CovariantLieDerivativerOfConnectionOnBase}, expressed via a connection as in Lem. \ref{CovariantLieDerivativeOnThebAseViaAConnection}, is related to the corresponding natural covariant Lie derivative of the same connection via\footnote{In this most general case, the vanishing of the curvature is not sufficient for the covariant Lie derivatives to define Lie algebra actions. Instead, a straightforward (but lengthy) calculation shows that $[\widetilde{L}_{\xi_1}, \, \widetilde{L}_{\xi_2}] = \widetilde{L}_{[\xi_1, \xi_2]}$ if and only if $$
\iota_{\xi_1} \dd^\omega \lambda_{\xi_2} - \iota_{\xi_2} \dd^\omega \lambda_{\xi_1} -\lambda_{[\xi_1,\xi_2]} + [\lambda_{\xi_1},\lambda_{\xi_2}] -\iota_{\xi_1}\iota_{\xi_2} F_{\omega} \, = \,  0 \quad \in \quad \Omega^0(X,\, P\times_G \frg) \, .
$$}
\begin{align}\label{ArbitraryCovLieVsNaturalCovLieOnMatterField}
\widetilde{L}_{\xi} \phi = L_\xi^\omega \phi - \lambda^\omega_\xi \cdot \phi  
\end{align}
and
\begin{align}\label{ArbitraryCovLieVsNaturalCovLieOnConnection}
\widetilde{L}_{\xi} \omega = L_\xi^\omega \omega + \dd^\omega (\lambda^\omega_\xi)
\,.
\end{align}
\end{example}

Collecting all the above, we have the following equivalent ways of determining a covariant Lie derivative directly on a spacetime $X$, recovering the physically motivated formulas from \cite{OG06} and further refining their geometrical origin.
\begin{proposition}[\bf Equivalent ways of defining a covariant Lie derivative]\label{EquivalentWaysOfDefiningACovariantLieDerivative}
In addition to Def. \ref{CovariantLieDerivative} of the covariant Lie derivative, this may be equivalently defined directly on the spacetime $X$ via a choice of: 
\begin{itemize}
\item Either a family of $\FR$-linear maps
$$
\big\{B_{i}: \CX(X) \longrightarrow \Omega^0(U_i; \, \mathfrak{g})\big\}_{i\in I}
$$
over some trivializing cover $\{U_i \hookrightarrow X\}$ of the bundle, related on overlaps as
\begin{align}\label{CorrectionTermPatching}
B_j= \gamma_{ji} \cdot B_i \cdot \gamma_{ij} + \gamma_{ji} \cdot \iota_\xi \differential \gamma_{ij} \, ,
\end{align} 
yielding the associated covariant Lie derivative via
$$\widetilde{L}_\xi \phi = L_\xi \phi + B_\xi \cdot \phi $$
and
$$
\widetilde{L}_\xi \omega = L_\xi \omega - \dd^\omega B_\xi \, .
$$
\item Or a pair $(\om, \lambda)$ where $\om$ is a given by family of local  gauge fields $\{\om_i\}_{i\in I}$  on X representing a connection $\wtom$ on P and 
$$\lambda \, : \, \CX(X) \longrightarrow \Omega^0(X;\, P\times_G \mathfrak{g})$$ is an $\FR$-linear map from spacetime vector fields into sections of the adjoint bundle, with any two such pairs being considered equivalent if 
$$
\iota_\xi \omega - \lambda_\xi \, = \iota_\xi \hat{\omega} - \hat{\lambda}_\xi \, 
$$
for all $\xi \in \CX(X)$.
\end{itemize}
The latter approach is related to the former via 
$B_\xi = \iota_\xi \omega - \lambda_\xi$, which then recovers the form from \eqref{CovariantLieDerivativerOfMatterFieldOnBase} and \eqref{CovariantLieDerivativerOfConnectionOnBase}. Finally the pairs of globally defined forms $(\widetilde{\om}, \widetilde{\lambda})$ on $P$ uniquely determine the corresponding $\FR$-linear lifts 
$$
\widetilde{(-)}\, : \, \mathcal{X}(X) \longrightarrow \mathcal{X}(P)_G \, . $$

\end{proposition}
\begin{proof}
The fact that Def. \ref{CovariantLieDerivative} implies the spacetime formulation in terms of such a family $$
\big\{B_{i}: \CX(X) \longrightarrow \Omega^0(U_i; \, \mathfrak{g})\big\}_{i\in I}
$$ follows from the formulas in Lem. \ref{CovariantLieDerivativeOnThebAseViaAConnection} and Eq. \eqref{CovariantizationTermSum}. That this may be equivalently expressed exactly in terms of equivalency classes of pairs $(\om,\lambda)$ follows by choosing an arbitrary connection on $P$ (which always exist) and then defining $\lambda_\xi = \iota_\xi \om - B_\xi$.  Notice that a different choice of connection $\hat{\omega}$ yields $\hat{\lambda}_\xi= \iota_\xi \hat{\omega} - B_\xi$, so that in particular $\iota_\xi\omega -\lambda_\xi =\iota_\xi \hat{\omega} - \hat{\lambda}_\xi$, as per Cor. \ref{CovariantizationTermIsIndependentOfConnection}.
Lastly, the total space avatar of any such pair $(\wtom, \widetilde{\lambda})$ fully determines the lift 
$$
\widetilde{\xi} = \widetilde{\xi}_H + \widetilde{\xi}_V
$$
by assigning the horizontal component to be the unique horizontal lift wrt $\wtom$ (cf. Ex. \ref{LiftingViaChosenConnection}) and the vertical component to be the inverse image of 
$\widetilde{\lambda}_\xi \in \Omega^0(P;\, \mathfrak{g}) \cong \Gamma_P(P\times \mathfrak{g})$ 
under the bundle isomorphism induced by the given connection
$$
\widetilde{\omega}|_{VP} \, : \, VP \xlongrightarrow{\sim}  P\times \mathfrak{g} \, .
$$

\end{proof}

\begin{remark}[\bf $C^\infty(X)$-linear lifts]\label{CinftyVsRlinearity}
Those lifts $\widetilde{(-)}\, : \, \mathcal{X}(X) \longrightarrow \mathcal{X}(P)_G$ that are moreover $C^\infty(X)$-linear -- and not only $\FR$-linear (cf. Def. \ref{CovariantLieDerivative}) -- correspond  precisely to the horizontal lifts of connections from Ex. \ref{LiftingViaChosenConnection}. Indeed,  by Prop. \ref{EquivalentWaysOfDefiningACovariantLieDerivative} such a lift corresponds equivalently to a family $
\big\{B_{i}: \CX(X) \longrightarrow \Omega^0(U_i; \, \mathfrak{g})\big\}_{i\in I}
$, each member of which is also $C^\infty(X)$-linear, and so may be canonically identified with a family of local 1-forms $
\big\{B_{i}\, \in \,  \Omega^1(U_i; \, \mathfrak{g})\big\}_{i\in I}$ which patch via \eqref{CorrectionTermPatching}, hence being local gauge fields representing some connection $\widetilde{B}$ on $P$. Further presenting this by a pair $(\omega,\lambda)$ as in Prop. \ref{EquivalentWaysOfDefiningACovariantLieDerivative} implies that $\lambda$ is also necessarily  $C^\infty(X)$-linear, hence being identified with a $1$-form valued in the adjoint bundle $\lambda \in \Omega^1(X; \, P \times_G \frg)$. The relation $B= \omega - \lambda$ then reduces to the fact that connections on a principal $G$-bundle form an affine space modeled $\Omega^1(X; \, P \times_G \frg)$.

Crucially, however, such $C^\infty(X)$-linear lifts do not exhaust all relevant examples. Indeed, the Kosmann lift (Prop. \ref{TheKosmannLiftExists}), which is relevant to isometries of (super-)gravitational backgrounds, is an example of a lift which is only $\FR$-linear.
\end{remark}

\section{The covariant Lie derivatives of (Super-)Gravity}\label{CovariantLieDerivativesOfSupergravitySec}
Having laid out the general theory of covariant Lie derivatives, we now specialize to the case of (super-)gravitational theories in the coframe (vielbein) formalism. In the (super-)gravity literature, there are essentially two  physically (implicitly) established choices of covariant Lie derivatives entering the study of these theories. Namely, the \textit{natural covariant Lie derivative} (Ex. \ref{LiftingViaChosenConnection}) suitable for studying the relation of the (super-)diffeomorphism invariance of the theory to  local translational (super-)symmetry (see also \cite{CDF91}\footnote{We stress that the nomenclature in \cite{CDF91} is non-standard. In particular, the term ``soft group manifold'' is used for our corresponding principal G-bundle. Furthermore, the choice of covariant Lie derivative and the corresponding lift of vector fields is fixed (only) \textit{implicitly} to be that of Ex. \ref{LiftingViaChosenConnection}, and is done so in local trivializations of the bundle.} and \cite{EEC12}), and the Kosmann Lie derivative suitable for correctly identifying conserved currents due to background isometries in the coframe formalism and for acting appropriately on associated (spinorial) fields \cite{FF09}\cite{JM15}\cite{Pr17}\cite{BO23}\cite{BMO24}\cite{BMO25}. Furthermore, using the vanishing of the gauge covariant Kosmann Lie derivative on coframes as the correct notion of isometry we show that, in the case of abelian symmetries, the vanishing of the traditional Lie derivative is an equally consistent condition -- even on non-trivial topologies. Here we present facts and formulas regarding these in modernized and rigorous mathematical language. We do this with brief justifications, but omitting the long accompanying calculations when they already exist in the literature. The aim of this section is to disentangle the remaining confusion about these concepts in a straightforward manner for both mathematicians and theoretical physicists. In this section we shall distinguish super-manifolds from purely bosonic manifolds by the notation $X$ and $\bosonic{X}$, respectively, following the conventions of \cite{GSS24-SuGra}\cite{GSS-M5Brane}.
\subsection{(Super-)Diffeomorphism symmetry in (super-)gravity and local translational (super-)symmetry}\label{SuperDiffeomorphismSymmetrySec}
\subsubsection{Diffeomorphism vs local translational symmetry}

Recall, a (pure) gravitational field configuration on a bosonic manifold $\bosonic{X}$ in the first-order formalism is given by a pair $(e,\om)$, where $e$ is a coframe $e\in \Omega^{1}(\bosonic{X}; \, \FR^{1,d}_{\mathrm{SO}(1,d)})$ and $\omega$ is Lorentzian connection respectively, defined with respect to a background principal $\mathrm{SO}(1,d)$-bundle $\bosonic{P}\rightarrow 
\bosonic{X}$ -- the latter henceforth to be referred to as an (\textit{abstract}) ``$\mathrm{SO}(1,d)$-structure''.\footnote{This nomenclature is related, but is in slight clash, with the standard usage of the term \textit{(classical)} $G$-structure in the differential geometry literature, whereby it is used to refer to $G$-principal \textit{subbundles} of the \textit{frame bundle} $F\bosonic{X}$. See Rem. \ref{AnotherGeometricCharacterization} for more on the relation of the two notions in the current setting of $\mathrm{SO(1,d)}$-structures and the metric vs coframe formalisms in field theory.} The natural covariant Lie derivative from Ex. \ref{LiftingViaChosenConnection} appears naturally when considering the diffeomorphism (gauge) symmetry of the theory and its relation to the would-be translational gauge symmetry. The main point here is that \textit{on-shell} configurations of $D=1+d$ dimensional Einstein--Cartan gravity with Lagrangian 
\begin{align}\label{EinsteinCartanLagrangian}
\mathcal{L}_{\mathrm{EC}}(e,\omega) \, := \, e^{d-1}\wedge R_\omega = \epsilon_{a_0\cdots a_d}e^{a_0}\wedge\cdots \wedge e^{a_{d-2}}\wedge R^{a_{d-1} a_{d}}  \quad \in \quad \Omega^{d+1}(\bosonic{X})\, ,
\end{align}
have, in particular, vanishing torsion\footnote{Hence once solved for the Levi--Civita connection, one passes to the second-order formulation of general relativity.}
$$
T \, := \, \dd^\omega e \equiv 0 \, ,
$$
and as such the natural covariant derivative of the coframe, acting as the (off-shell) gauge symmetry generated by spacetime vector fields, reduces to
\begin{align*}
L_\xi^\omega e \, &= \, \iota_\xi T + \dd^\omega (\iota_\xi e) \\
&= \, \dd^\omega(\tau_\xi) 
\end{align*}
for 
$$
\tau_\xi := \iota_\xi e \quad \in \quad \Omega^{0}(\bosonic{X}; \, \FR^{1,d}_{\mathrm{SO}(1,d)})\, ,
$$
a ``\textit{translational gauge parameter}''.

Since the coframe field constitutes (by its non-degeneracy) an isomorphism of bundles\footnote{Note that the associated vector bundle $\FR^{1,d}_{\mathrm{SO}(1,d)}$ (often termed the ``fake tangent bundle'') is \textit{not} necessarily trivial, and hence the existence of an isomorphism to the tangent bundle over $\bosonic{X}$ does not require the latter to be parallelizable.} (i.e., a ``soldering form'') 
\begin{equation}\label{CoframeDef}
e: T\bosonic{X} \xrightarrow{\quad \sim\quad } \FR^{1,d}_{\mathrm{SO}(1,d)}\, ,
\end{equation}
it follows that on-shell 
(and only on-shell) 
the infinitesimal diffeomorphism transformation on the coframe, generated by vector fields via the natural covariant Lie derivative, may be equivalently interpreted as a local $\FR^{1,d}$-translational  gauge transformation on the coframe. This suggests that the combined field $(e,\om)$ could represent an actual connection for the Poincar{\'e} group $\mathrm{ISO}(\FR^{1,d}) := \mathrm{SO}(1,d)\ltimes \FR^{1,d}$, which is of course not the case.
\begin{remark}[\bf Einstein gravity is not a gauge theory for the Poincar\'{e} group]\label{EinsteinGravityIsNotPoincareGaugeTheory}
By Ex. \ref{LiftingViaChosenConnection}, the full field transformation along vector fields is given by
$$
\delta^{\mathrm{cov}}_\xi(e,\om)\, = \, (\, \iota_\xi T + \dd^\omega (\iota_\xi e),\, \iota_\xi R_\omega)\, ,
$$
which is indeed a symmetry of the Einstein--Cartan Lagrangian \eqref{EinsteinCartanLagrangian} (see e.g. \cite[p. 38]{CGRS20}\cite[p. 147]{CDF91}), 
and hence also of its field equations \cite[Prop. 2.42]{GS25}. On the subspace of on-shell fields, where in particular $T=0$, this transformation simplifies as
$$
\delta_\xi^{\mathrm{cov}}(e,\om)|_{\mathrm{on}\mbox{-}\mathrm{shell}}\,=  \,( \dd^\omega (\iota_\xi e),\, \iota_\xi R_\omega)\, ,
$$
remaining a symmetry of both the (restricted) Lagrangian and field equations. However, we stress that the transformation on $\omega$ is \textit{not the translational} part corresponding to the adjoint representation of $\mathrm{ISO}(d,1)$, which would yield instead
\begin{align}\label{OnShellTranslationalTranformation}
\delta_{\tau_\xi}^{\mathrm{gauge}}(e,\om)|_{\mathrm{on}\mbox{-}\mathrm{shell}}\,=  \,( \dd^\omega \tau_\xi,\, 0)\, .
\end{align}
for $\tau_\xi = \iota_\xi e$. Indeed, it is easy to see that this local translational transformation is not a symmetry of the Lagrangian \eqref{EinsteinCartanLagrangian}, nor of its field equations \cite[p. 147]{CDF91}.\footnote{The $D=1+2$ case is a famous exception, since in that case being on-shell also implies the flatness condition $R_\om=0$. In fact the $D=1+2$ theory is formally closely related to Chern-Simons theory, but due to the non-degeneracy condition on the coframe, it is not exactly equivalent to it \cite{Witten07}\cite{CGRS20}.}

Thus, it is incorrect to say that Einstein--Cartan gravity is a gauge theory for the Poincar{\'e} group, even though the composite pair $(e,\om)$ may be seen locally as a 1-form valued in the Poincar{\'e} Lie algebra. The coframe's non-degeneracy (soldering) condition, coupled with the natural gauge invariance of the Lagrangian \eqref{EinsteinCartanLagrangian} under only the Lorentz subgroup $\mathrm{SO
}(1,d) \hookrightarrow \mathrm{ISO}(1,d)$, establish \textit{decisively} Einstein gravity as a dynamical theory of \textit{Cartan connections} (see \cite{Sharpe}\cite{Catren15}\cite{Schreiber16}\cite{Scholz19}\cite{McKay}\cite{FR24} for modernized reviews and pointers to historical background literature, following \cite{Cartan23}). The extent to which the infinitesimal diffeomorphism symmetry of the Einstein--Cartan Lagrangian may be identified with infinitesimal local translational $\FR^{1,d}$-transformations is limited only to the coframe part of the field content, and furthermore only for on-shell configurations of the theory.
\end{remark}

\subsubsection{Super-diffeomorphism symmetry vs local translational supersymmetry}

Nevertheless, the upshot of this partial identification of the on-shell gauge symmetries is that it extends to theories of supergravity, albeit in their super-spacetime formulation. Indeed, by essentially parsing the statement in reverse, it provides an explanation for the origin of (on-shell) supersymmetry transformations as nothing but the action of the natural covariant Lie derivatives along odd vector fields. 
 
 Consider, for instance, the Lagrangian for $N=1$, $D=1+3$ supergravity
\begin{align}\label{EinsteinCartanRSLangrangian}
\mathcal{L}_{\mathrm{EC}}^{RS} (e,\omega,\psi) \, := \, e^{2}\wedge R_\omega + 4 \big(\,\overline{\psi} \,\Gamma_{0123}\, \Gamma_a \,  \rho \big) \wedge e^a   
\end{align}
on a manifold equipped with a Spin$(1,3)$-structure via a (bosonic) principal bundle $\bosonic{P}\rightarrow \bosonic{X}$, where 
$$
\rho\, := \, \dd^\omega \psi
$$ is the covariant derivative of the gravitino (``Rarita-Schwinger'') field 
$$
\psi \quad \in \quad  \Omega^1_\mathrm{dR}(\bosonic{X}; \, \mathbf{4}^\odd_{\mathrm{Spin}(1,3)}) \, ,
$$ 
a fermionic field valued  in the vector bundle associated to (real Majorana) $[3/2]$ representation of $\mathrm{Spin}(1,3)$ . The Lagrangian is again invariant under infinitesimal diffeomorphisms via the natural covariant Lie derivative, and local Lorentz transformations due the equivariance of the pairing $\big(\,\overline{(\cdot)} \, (\cdot) \,\big): \mathbf{4}\times \mathbf{4} \rightarrow \mathbb{R}$. Crucially, \eqref{EinsteinCartanRSLangrangian} is furthermore invariant under the infinitesimal local supersymmetry transformations (see e.g. \cite[(III.6.62a)-(III.6.62c)]{CDF91})
\begin{align}\label{4DSugraSusyTransf}
\delta^{\mathrm{susy}}_{\varepsilon} e \, & := \, 2 (\,\overline{\varepsilon} \, \Gamma  \, \psi) \nonumber\\
\delta^{\mathrm{susy}}_\varepsilon \psi &:= \, \dd^\omega \varepsilon
\\
\delta^{\mathrm{susy}}_\varepsilon \omega^{ab} & := \,  2\epsilon^{abb_1b_2} \big( \overline{\varepsilon}\,\Gamma_{0123}\,\Gamma_{b_3} \,\rho_{b_1 b_2}\big) \,e^{b_3} 
+ 2\epsilon^{b_1b_2b_3 [a} \big(\,\overline{\varepsilon}\Gamma_{0123}\,  \Gamma_{b_1}\, \rho_{b_2 b_3} \big) \,e^{b]} \, , \nonumber
\end{align}
where 
$$
\varepsilon \quad \in \quad  \Omega^0_\mathrm{dR}(\bosonic{X}; \, \mathbf{4}^\odd_{\mathrm{Spin}(1,3)}) \, ,
$$ 
is an ``\textit{odd translational gauge parameter}'' valued in the corresponding spinorial representation. 
We note that on-shell, 
where in particular
$$
\dd^\omega e  - \big(\, \overline{\psi} \, \Gamma \, \psi \big) \equiv 0 \, ,
$$
the supersymmetry transformation on the spin connection may be written in a different form which nevertheless remains \textit{non-vanishing} \cite[III.2.A3]{CDF91}. 

The transformations of the coframe and gravitino $(e,\,\psi)$, i.e., the super-coframe, have precisely the form of a local infinitesimal gauge transformation for the odd translational part of the super-Poincar{\'e} group\footnote{See \cite[Sec. 3]{GSS24-HiddenGroup} for a rigorous treatment of such super-Lie groups.}
\begin{align}\label{N1D4SuperPoincareGroup}
\mathrm{ISO}({\mathbb{R}^{1,3\vert \mathbf{4}}})\, := \, \mathrm{Spin}(1,3)\ltimes \FR^{1,3\vert \mathbf{4}}\, = \,  \mathrm{Spin}(1,3)\ltimes \big(\FR^{1,3}\times \mathbf{4}^{\odd}  \big) \, ,
\end{align}
but \textit{not} for the $\mathrm{Spin}(1,3)$-connection, even on-shell, which would have been
$$
\delta^{\mathrm{gauge}}_\varepsilon \omega = 0 \, .
$$
This parallels Rem. \ref{LiftingViaChosenConnection}, where the on-shell ``local  $\FR^{1,3}$-translations $\tau_\xi$'' of the theory do not act via the adjoint action of $\mathrm{ISO}(\mathbb{R}^{1,3})$ on $\omega$, but rather via the corresponding covariant natural Lie derivative along $\xi \in \CX(\bosonic{X})$. Said otherwise, the discrepancy is explained by showing the origin of the would-be translational symmetry to be instead diffeomorphism invariance.

For supergravity, an analogous explanation for the origin of local (translational) supersymmetry exists, provided one passes to its (on-shell) \textit{rheonomic super-space formulation} \cite{CDF91}. Summarizing the construction described therein in modern mathematical language\footnote{As recently established rigorously in the analogous but more complicated 11D super-space supergravity case in \cite{GSS24-SuGra}.}: Let $\bosonic{X}$ be a 4-dimensional manifold equipped with a $\mathrm{Spin}(1,3)$-structure $\bosonic{P}\rightarrow \bosonic{X}$, and consider the canonical supermanifold extension $X$ associated to the spinor representation $\mathbf{4}$ as \cite[Ex. 2.13, 2.77]{GSS24-SuGra}
\begin{align*}
X \, := \, \bosonic{X} \,\big\vert\,  \mathbf{4}_{\mathrm{Spin(1,3)}}  = \bosonic{X}\,  \big\vert\, (\bosonic{P}\times_{\mathrm{Spin(1,3)}} {\mathbf{4}})\, , 
\end{align*}
with its bosonic body being the original spacetime manifold
$$
\iota \, : \, \bosonic{X} \longhookrightarrow X \, .
$$

It is then a non-trivial result \cite[III.3.6]{CDF91} that a super-spacetime (\cite[Def. 2.74]{GSS24-SuGra})\footnote{In particular, this consists of a $\mathrm{Spin}(1,3)$-structure on X, an associated $\mathrm{Spin}(1,3)$ connection $\omega^s$ and a super-coframe which encodes an isomorphism between the super-tangent bundle and the associated super-vector bundle  $(e^s,\psi^s): TX \xrightarrow{\quad \sim \quad} \FR^{1,3\vert \mathbf{4}}_{\mathrm{Spin}(1,3)}$.} 
$$
\big(X,\,(e^s, \,\psi^s,\,\omega^s)\big)
$$ with the sole condition that its (superized) field strengths have expansions in the super-coframe basis $(e^s,\, \psi^s)$ of the particular form \cite[III.3.74.a-c]{CDF91}
\begin{align}\label{SuperSpaceFieldStrengthsOnshellParametrization}
T^s\,&:=\, \dd^{\omega^s} e^s \, \equiv\,  \big(\, \overline{\psi^s}\, \Gamma \psi^s\,\big) \nonumber \\
\rho^s\, &:= \, \dd^{\omega^s} \psi^s \equiv  \frac{1}{2}\, \rho^s_{ab}\,  e^{s,a}\wedge e^{s,b} \\
R^{s, ab} \, &:= \dd^{\omega^s} \omega^{s, ab} \equiv \, \frac{1}{2}\, R^{s, ab}_{cd} \,  e^{s,c}\wedge e^{s,d} \, + \,  2\epsilon^{aba_1a_2} \big( \overline{\psi^s}\,\Gamma_{0123}\,\Gamma_{a_3} \,\rho^s_{a_1 a_2}\big) \,e^{s, a_3} 
+ 2\epsilon^{a_1a_2a_3 [a} \big(\,\overline{\psi^s}\Gamma_{0123}\,  \Gamma_{a_1}\, \rho^s_{a_2 a_3} \big) \,e^{s,b]} \nonumber \, ,
\end{align}
is precisely equivalent to 
\begin{itemize}
\item[\bf (i)]  the restriction 
$$
(e,\psi,\om)\,:=\, \iota^* (e^s, \, \psi^s, \, \om^s) \, \equiv  \,(e^s, \, \psi^s, \, \om^s) |_{\theta=0}
$$ satisfying the on-shell field equations associated to the $N=1$ $4D$ supergravity Lagrangian \eqref{EinsteinCartanRSLangrangian} on the bosonic spacetime$\bosonic{X}$, and

\item[\bf (ii)] the superized fields being uniquely determined as a ``rheonomic extension''\footnote{The full justification of this latter statement is due to the existence of super-normal coordinates for the underlying $\mathrm{Spin}$ structure \cite{McArthur84}\cite{Tsimpis04}\cite[Rem. 2.3]{GSS24-M5Embedding}.} of their bosonic spacetime $\bosonic{X}$ restriction.
\end{itemize}

\medskip
Concisely:
\begin{equation}
  \label{SuperSpaceRheonomyStatement}
 \colorbox{lightgray}
  {
     \def\arraystretch{1.2}
    \begin{tabular}{c}
      $(4\vert\mathbf{4})$-dimensional 
      super-spacetimes  $\big(
        X, (e^s, \psi^s, \omega^s)
      \big)$
      \\
      with field strengths of the form
\eqref{SuperSpaceFieldStrengthsOnshellParametrization}.
\end{tabular}
  }
  \hspace{.3cm}
  \Leftrightarrow
  \hspace{.3cm}
  \colorbox{lightgray}{
    \begin{tabular}{c}
      Solutions of $N=1$ 4D SuGra \eqref{EinsteinCartanRSLangrangian}
      \\
$(e,\psi,\om)$ on the bosonic body $\bosonic{X}$. \end{tabular}
    }
\end{equation}

\medskip 
\noindent
Given this fact and explicit formulas, the (on-shell) relation of the theory's infinitesimal local spacetime supersymmetry on $\bosonic{X}$ to infinitesimal super-diffeomorphisms on $X$ becomes apparent. Namely, by the super-coframe property of $(e^s\, ,\, \psi^s )$ one has in particular an isomorphism of bundles over $X$
$$
\psi^s \, : \, T^{\odd}X \xrightarrow{\quad \sim\quad } \mathbf{4}^{\odd}_{\mathrm{Spin}(1,3)} \, ,
$$
and hence any \textit{odd} vector field\footnote{Strictly speaking this statement may be taken to include all sections of $T^\odd X$, which might not in general be \textit{odd} in the $\mathbb{Z}_2$-sense.}
$$
\eta  \quad \in \quad \CX^{\odd}(X)
$$
may be identified with an ``odd-translational gauge parameter'' 
$$
\varepsilon_\eta \, : = \, \iota_\eta \psi^s  \quad \in \quad \Omega^{0,\odd}(X; \, \mathbf{4}^{\odd}_{\mathrm{Spin}(1,3)})
$$
and vice-versa.

It follows that the action of an odd vector field $
\eta$ on any on-shell super-space supergravity field configuration, via the natural covariant Lie derivative (Ex. \ref{LiftingViaChosenConnection}), is given by 
\begin{align}\label{OddVectorFieldTransfAsSusyTransf}
\delta_{\eta}^\mathrm{cov}e^s \vert_{\mathrm{on-shell}} &= \iota_{\eta}\dd^{\omega^s} e^s+ \dd^{\omega^s}\big(\iota_{\eta}e^s) = \iota_{\eta^s} \big(\, \overline{\psi^s}\, \Gamma\, \psi^s\,\big)
\nonumber \\
&= 2 \big(\,\overline{\varepsilon}_\eta\, \Gamma \,\psi \big) \,  \nonumber
\\
\delta_{\eta}^\mathrm{cov} \psi^s \vert_{\mathrm{on-shell}} \, &= \, \iota_{\eta}\dd^{\omega^s} \psi^s+ \dd^{\omega^s} (\iota_{\eta}\psi^s) = \iota_\eta \rho^s + \dd^{\omega^s}\varepsilon_\eta \\
&= \dd^{\omega^s} \varepsilon_\eta
\nonumber \\
\delta_{\eta}^\mathrm{cov}\omega^s \vert_{\mathrm{on-shell}} \, &= \iota_\eta R^s=   0 + \iota_\eta \, \bigg( 2\epsilon^{abb_1b_2} \big( \overline{\psi^s}\,\Gamma_{0123}\,\Gamma_{b_3} \,\rho^s_{b_1 b_2}\big) \,e^{s, b_3} 
+ 2\epsilon^{b_1b_2b_3 [a} \big(\,\overline{\psi^s}\Gamma_{0123}\,  \Gamma_{b_1}\, \rho^s_{b_2 b_3} \big) \,e^{s,b]}  \bigg) \nonumber \\
&=  2\epsilon^{abb_1b_2} \big( \overline{\varepsilon}_\eta\,\Gamma_{0123}\,\Gamma_{b_3} \,\rho_{b_1 b_2}\big) \,e^{b_3} 
+ 2\epsilon^{b_1b_2b_3 [a} \big(\,\overline{\varepsilon}_\eta\Gamma_{0123}\,  \Gamma_{b_1}\, \rho_{b_2 b_3} \big) \,e^{b]} \, , \nonumber
\end{align}
where we used the on-shell form of the superized field strengths \eqref{SuperSpaceFieldStrengthsOnshellParametrization}. Finally, restricting the formulas to the underlying bosonic spacetime 
$$
\iota \, : \, \bosonic{X} \longhookrightarrow X \, 
$$
yields precisely the local super-symmetry transformations from Eq. \eqref{4DSugraSusyTransf}. 

\begin{remark}[\bf Supergravities are not gauge theories for super-Poincar{\'e} groups]\label{SupergravitiesAreNotGaugeTheoriesForSuperPoincareGroups}
In extension of Rem. \ref{EinsteinGravityIsNotPoincareGaugeTheory}, the non-zero supersymmetry transformation \eqref{4DSugraSusyTransf} on the Spin connection $\omega$ shows that $N=1$, $D=4$ supergravity is not a gauge theory for the super-Poincar{\'e} group $\mathrm{ISO}(\mathbb{R}^{1,3\vert\mathbf{4}})$ from \eqref{N1D4SuperPoincareGroup}. This fact persists and, in fact, becomes more apparent in higher dimensional supergravities. 

For instance, in the (maximal) case of $D=11$ supergravity \cite{CJS78} not only is the supersymmetry transformation of the $\mathrm{Spin}(1,10)$-connection non-zero (see e.g. \cite[\S III.8.I]{CDF91})
$$
\delta_{\varepsilon}^{\mathrm{susy}} \omega \, \neq 0 \, ,  
$$
where 
$
\varepsilon \, \in \, \Omega^0_{\mathrm{dR}}\big(\bosonic{X}; \, \mathbf{32}^{\odd}_{\mathrm{Spin}(1,10)} \big)  
$ is a odd gauge parameter, but the existence of the $(G_4,\, G_7)$-field also modifies the supersymmetry transformation of the corresponding gravitino field 
$$
\psi \quad \in \quad \Omega^1_{\mathrm{dR}}\big(\bosonic{X}; \, \mathbf{32}^{\odd}_{\mathrm{Spin}(1,10)} \big) 
$$ 
to
\begin{align}\label{11DSusyTransformationOfGravitino}
\delta_{\varepsilon}^{\mathrm{susy}}\psi \, := \, \dd^\omega \varepsilon \, +  \,  \big( \tfrac{1}{6}
      \tfrac{1}{3!}
      (G_4)_{a \, b_1 b_2 b_3}
      \Gamma^{b_1 b_2 b_3}
      \,-\,
      \tfrac{1}{12}
      \tfrac{1}{4!}
      (G_4)^{b_1 \cdots b_4}
      \Gamma_{a \, b_1 \cdots b_4}\big) \cdot \varepsilon \, e^a
      \, .
\end{align}
In other words, even on the supersymmetry partner of the coframe $e$, the supersymmetry transformation does not act as an odd local gauge translation, i.e., as it would if the gravitational field $(e,\, \psi ,\omega)$ of 11D supergravity were an actual gauge field for the corresponding super-Poincar{\'e} group
\begin{align}\label{D11SuperPoincareGroup}
\mathrm{ISO}({\mathbb{R}^{1,10\vert \mathbf{32}}})\, := \, \mathrm{Spin}(1,10)\ltimes \FR^{1,10\vert \mathbf{32}}= \mathrm{Spin}(1,10)\ltimes \big(\FR^{1,10}\times \mathbf{32}^{\odd}  \big) \,, 
\end{align}
in which case
$$
\delta_\varepsilon^{\mathrm{gauge}}\psi \, = \, \dd^\omega \varepsilon \, .
$$

Nevertheless, the explanation for the origin of these local supersymmetry transformations (on all fields) as the action of odd vector fields on the rheonomic super-space formulation of $11D$ supergravity (\cite[Thm. 3.1]{GSS24-SuGra}, following \cite[\S III.8.5]{CDF91}), acting via the natural covariant Lie derivative of Ex. \ref{CovariantLieDerivative}, is still valid. In particular, on the canonical extended super-spacetime
\begin{align*}
X \, := \, \bosonic{X} \,\big\vert\,  \mathbf{32}_{\mathrm{Spin(1,10)}}  = \bosonic{X}\,  \big\vert\, (\bosonic{P}\times_{\mathrm{Spin(1,10)}} {\mathbf{32}})\, , 
\end{align*}
the on-shell superized gravitino field strength has a coframe expansion of the form (cf.  \cite[Eq. (127), (135), (163)]{GSS24-SuGra})
$$
\rho^s\, := \, \dd^{\omega^s} \psi^s \equiv  \frac{1}{2}\, \rho^s_{ab}\,  e^{s,a}\wedge e^{s,b}  \, +  \,  \big( \tfrac{1}{6}
      \tfrac{1}{3!}
      (G^s_4)_{a \, b_1 b_2 b_3}
      \Gamma^{b_1 b_2 b_3}
      \,-\,
      \tfrac{1}{12}
      \tfrac{1}{4!}
      (G^s_4)^{b_1 \cdots b_4}
      \Gamma_{a \, b_1 \cdots b_4}\big) \cdot \psi \, e^{s,a}
      \, . 
$$
Similarly to the $4D$ case described previously, this immediately implies that the on-shell action of the natural covariant Lie derivative along an odd vector-field $\eta \, \in \CX^{\odd}(X)$ is given by
\begin{align*}
\delta_{\eta}^\mathrm{cov} \psi^s \vert_{\mathrm{on-shell}} \, &= \, \iota_{\eta}\dd^{\omega^s} \psi^s+ \dd^{\omega^s} (\iota_{\eta}\psi^s) = \iota_\eta \rho^s + \dd^{\omega^s}\varepsilon_\eta \\
&= \dd^{\omega^s} \varepsilon_\eta \, +  \,  \big( \tfrac{1}{6}
      \tfrac{1}{3!}
      (G^s_4)_{a \, b_1 b_2 b_3}
      \Gamma^{b_1 b_2 b_3}
      \,-\,
      \tfrac{1}{12}
      \tfrac{1}{4!}
      (G^s_4)^{b_1 \cdots b_4}
      \Gamma_{a \, b_1 \cdots b_4}\big) \cdot \varepsilon_\eta \, e^{s,a}
      \, . 
\end{align*}
for $\varepsilon_\eta \, := \, \iota_\eta \psi^s  \in \Omega^0_{\mathrm{dR}}\big(X; \, \mathbf{32}^{\odd}_{\mathrm{Spin}(1,10)} \big)$ the corresponding odd gauge parameter on super-spacetime. Restricting to the bosonic spacetime along $\iota: \bosonic{X} \hookrightarrow X$ yields precisely the susy transformation of the gravitino \eqref{11DSusyTransformationOfGravitino}. The local susy transformations of all the other fields in 11D supergravity may be recovered in exactly the same manner via the on-shell super-space forms of their field strengths. 

\smallskip
These observations provide a fully rigorous technical justification for the insightful intuition and calculations of \cite{CDF91}. Summarizing, it is incorrect to say that supergravities are gauge theories of the corresponding super-Poincar{\'e} groups. A more precise statement is that they are dynamical theories of super-Cartan connections \cite{Eder21}\cite{Eder23}\cite{Ratcliffe22}\cite{GSS24-SuGra} corresponding to the subgroup inclusions $\mathrm{Spin}(1,d)\hookrightarrow \mathrm{ISO}(\mathbb{R}^{1,d\vert \mathbf{N}})$, and potentially of additional (higher) gauge fields\footnote{And potentially of further scalar fields in some dimensions.}, via their super-spacetime formulation. Apart from the standard $\mathrm{Spin(1,d)}$ -- and \textit{not} $\mathrm{ISO}(\mathbb{R}^{1,d\vert \mathbf{N}})$ --  local gauge-invariance of these theories\footnote{And the potential extra local gauge-invariance due to additional higher gauge fields.}, the (often complicated) infinitesimal local supersymmetry transformations on the underlying bosonic spacetimes $\bosonic{X}$ may be identified as the (on-shell) infinitesimal super-diffeomorphism invariance of the theory on the canonically associated super-spacetime extension  $X$.  
\end{remark}

\subsection{The Kosmann Lie derivative and isometries of (super-)spacetime backgrounds}\label{TheKosmannLieDerivativeAndIsometriesOfSuperSpacetimeBackgrounds}

Aside from the general covariance of (super-)gravity, parametrized infinitesimally by vector fields and acting via the natural covariant Lie derivative, vector fields are also used to describe infinitesimal isometries of fixed background solutions of any such theory. Recall that for a fixed pure gravitational background $(\bosonic{X},g)$ in the metric formalism, a vector field $\xi\in \CX(\bosonic{X})$ is an \textit{infinitesimal isometry} or \textit{Killing} if it preserves the metric
$$
\delta_\xi g \, := \, L_\xi g\, \equiv \, 0\, .
$$
When this background spacetime is coupled to other dynamical gauge theoretic fields unrelated to the metric, e.g. gauge fields such as the electromagnetic gauge field $A$, conserved currents for this isometry are computed via the natural covariant Lie derivative (Ex. \ref{LiftingViaChosenConnection})
$$
\delta_{\xi} A \, := \, \iota_\xi F_{A} 
$$
and analogously for associated matter fields \cite[Eq. 3.5]{Ja80}.

However, for fields associated to the $\mathrm{SO}(1,d)$-structure corresponding to the metric, or more generally a $\mathrm{Spin}(1,d)$-lift thereof -- such as coframe and spinorial (fermionic) fields -- the appropriate choice of covariant Lie derivative is different. Namely, the relevant choice has come to be called the ``Spinorial Lie derivative'' or the ``\textit{Kosmann Lie derivative}'', given originally in local coordinates by  Lichnerowicz \cite{Li63} for the case of \textit{Killing} vector fields and then generalized (in an ad-hoc manner) to arbitrary vector fields by Kosmann \cite{Ko72}. Since any such covariant Lie derivative is actually completely determined by the corresponding lift of spacetime vector fields, here we provide a completely natural way to identify it, irrespective of the existence of fermions but which nevertheless  recovers the Spinorial Lie derivative. We do so in the coframe formalism, hence defining first the lift to any associated (abstract) background principal  $\mathrm{SO}(1,d)$-bundle (cf. Rem. \ref{AnotherGeometricCharacterization}).

\begin{definition}[\bf Kosmann lift]\label{KosmannLift}
Let $e$ be a coframe \eqref{CoframeDef} on $\bosonic{X}$ with respect to an (abstract) background $\mathrm{SO}(1,d)$-structure $\bosonic{P}\rightarrow \bosonic{X}$, defining the corresponding metric
$$
g \, : = \langle e \, ,  e \rangle  \equiv \eta_{ab}\, e^a \, \otimes \, e^b \, .
$$
Its \textit{Kosmann lift} of vector fields 
$ ( - )^K  : \, \mathcal{X}(\bosonic{X}) \longrightarrow \mathcal{X}(\bosonic{P})$ is the one determined by the condition that its covariant Lie derivative on the coframe satisfies 
\begin{align}\label{KosmannCondition}
L_\xi g \, = 2 \langle L_\xi^K e \, ,  e \rangle \equiv \, 2\,  \eta_{ab}  \, (L_{\xi}^K e)^a \, \otimes \, e^b \, = \,  \,2\,   L_{\xi}^K e_b\, \otimes \, e^b\, . 
\end{align}
\end{definition}
The point of this defining equality is that the right hand side involves \textit{globally defined objects}, and not just the naive Lie derivative of $e$ -- which is not generally globally defined. In particular, this  condition implies that 
$$
L_{\xi} g = 0 \quad \iff \quad  L_{\xi}^{K} e = 0 \, . 
$$
This is a well-known property of the Kosmann Lie derivative of the coframe (see e.g. \cite[p. 40-41]{OG06}\cite[p. 5]{JM15}), guaranteeing that conserved currents of second-order and first-order formalisms of equivalent gravitational theories will coincide \cite{FF09}. In other words, by its very defining property, it is the correct choice of covariant Lie derivative so that the equivalence of first order gravity theories and their second order versions is manifest not only at the level of on-shell moduli spaces, but also at the level of conserved currents.

Here we provide an intrinsically covariant proof of its existence and uniqueness, which also allows to identify the corresponding lift of vector fields and its covariant Lie derivative on all associated fields via Prop. \ref{EquivalentWaysOfDefiningACovariantLieDerivative}.
\begin{proposition}
[\bf The Kosmann lift exists]\label{TheKosmannLiftExists}
The Kosmann Lie derivative of a coframe $e$ exists, is unique and is completely determined by $e$ itself. Explicitly, it is given equivalently by
\begin{align}\label{KosmannCovariantLieDerivativeFormula}
(L^K_\xi e)^a \, :&= \, \eta^{ad} \,  (\dd^{\omega^{\mathrm{LC}}} \iota_\xi e )_{(db)} \, e^b \\ 
&= \, \eta^{ad} \,  (L_\xi e)_{(db)}\, e^b \nonumber\, .
\end{align}
Yet equivalently, expressed as a covariantization correction to the standard Lie derivative (cf. Eq. \ref{CovariantLieDerivativerOfMatterFieldOnBase} and Prop. \ref{EquivalentWaysOfDefiningACovariantLieDerivative}), this means
\begin{align}\label{KosmannCovariantLieDerivativeFormulaAsCorrection}
L^K_\xi e \, = \, L_\xi e + B^K_\xi \cdot e
\end{align}
where 
\begin{align*}
(B^K_\xi)^{a}{}_b\, :&= \, (\iota_\xi \omega^{\mathrm{LC}} - \lambda_\xi^K)^a{}_b \\
 &= -\big( \eta^{ad}\, (L_\xi e)_{[bd]}\big) 
\end{align*} 
with $(\lambda^K_\xi)^{a}{}_b \, = \, \eta^{ad} \, (\dd^{\omega^{\mathrm{LC}}}\iota_\xi e)_{[bd]}\, .$

Furthermore, this completely determines the corresponding Kosmann lift 
$$
 ( - )^K  : \, \mathcal{X}(\bosonic{X}) \longrightarrow \mathcal{X}(\bosonic{P}) 
$$
and hence covariant Lie derivatives for all associated fields.  
\end{proposition}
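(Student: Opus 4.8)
The plan is to exploit the non-degeneracy (soldering) of the coframe to expand everything in the coframe basis, thereby turning the defining tensor identity \eqref{KosmannCondition} into an algebraic constraint on the covariantization correction term, and then to invoke Prop.~\ref{EquivalentWaysOfDefiningACovariantLieDerivative} to conclude existence, uniqueness, and determination by $e$ alone.

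First I would record that, by Prop.~\ref{EquivalentWaysOfDefiningACovariantLieDerivative} (equivalently Lem.~\ref{CovariantLieDerivativeOnThebAseViaAConnection}), any covariant Lie derivative acting on $e \in \Omega^1(\bosonic{X};\,\FR^{1,d}_{\mathrm{SO}(1,d)})$ --- viewed as a bundle-valued $1$-form in the defining representation --- is necessarily of the form $L_\xi e + B_\xi \cdot e$ for a family of local $0$-forms $B_\xi$ valued in $\mathfrak{so}(1,d)$, i.e.\ with lowered components $(B_\xi)_{ab}$ antisymmetric. Since $e$ is a coframe, every $\FR^{1,d}$-valued $1$-form expands uniquely in the basis $\{e^b\}$; writing $(L_\xi e)_a = (L_\xi e)_{ab}\, e^b$ and likewise for the candidate $L^K_\xi e$, Def.~\ref{KosmannLift} becomes an identity of symmetric $2$-tensors whose coefficients are these components.

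Next I would expand both sides of \eqref{KosmannCondition} in this basis. A direct derivation-of-the-pairing computation gives $L_\xi g = 2\,(L_\xi e)_{(ab)}\, e^a \otimes e^b$, while the right-hand side $2\langle L^K_\xi e, e\rangle$ has coefficients $(L^K_\xi e)_{ab}$; matching them shows that \eqref{KosmannCondition} is equivalent to $(L^K_\xi e)_{ab} = (L_\xi e)_{(ab)}$, which is precisely the claimed formula \eqref{KosmannCovariantLieDerivativeFormula} in lowered-index form. The crux is then to split this equation into its symmetric and antisymmetric parts: the symmetric part $(L^K_\xi e)_{(ab)} = (L_\xi e)_{(ab)}$ holds automatically for \emph{every} admissible covariant Lie derivative, since an $\mathfrak{so}(1,d)$-valued correction satisfies $(B_\xi)_{(ab)} = 0$ and contributes nothing to the symmetric part; whereas the antisymmetric part $(L^K_\xi e)_{[ab]} = 0$ forces $(B_\xi)_{ab} = -(L_\xi e)_{[ab]}$, pinning $B_\xi$ down uniquely. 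This establishes existence and uniqueness at once, and exhibits the correction \eqref{KosmannCovariantLieDerivativeFormulaAsCorrection} as determined by $e$ alone.

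Then I would check consistency: the correction $(B^K_\xi)_{ab} := -(L_\xi e)_{[ab]}$ so obtained is manifestly antisymmetric, hence genuinely $\mathfrak{so}(1,d)$-valued, so by Prop.~\ref{EquivalentWaysOfDefiningACovariantLieDerivative} it does define a bona fide covariant Lie derivative. To recover the Levi--Civita expressions and the form of $\lambda^K_\xi$, I would use that $\omega^{\mathrm{LC}}$ is torsion-free, so that the natural covariant Lie derivative of Ex.~\ref{LiftingViaChosenConnection} reads $(\dd^{\omega^{\mathrm{LC}}}\iota_\xi e)_a = (L_\xi e)_a + (\iota_\xi \omega^{\mathrm{LC}})_a$; combined with the antisymmetry of $\iota_\xi \omega^{\mathrm{LC}}$, taking symmetric and antisymmetric parts yields $(\dd^{\omega^{\mathrm{LC}}}\iota_\xi e)_{(ab)} = (L_\xi e)_{(ab)}$ and extracts $\lambda^K_\xi$ from $B^K_\xi = \iota_\xi \omega^{\mathrm{LC}} - \lambda^K_\xi$. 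The full lift to $\mathrm{F}\bosonic{X}_g$ and the induced covariant Lie derivative on all associated (e.g.\ spinorial) fields then follow immediately from the final clause of Prop.~\ref{EquivalentWaysOfDefiningACovariantLieDerivative}. The main obstacle, such as it is, is essentially bookkeeping: keeping the symmetrization conventions and index placements consistent so that the defining condition is seen to constrain exactly the antisymmetric part of the correction --- and to verify that $\mathfrak{so}(1,d)$-valuedness is compatible with, rather than overdetermined by, the metric condition, which here falls out automatically.
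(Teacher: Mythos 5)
Your proposal is correct and is essentially the paper's own argument: both proofs hinge on the observations that any admissible correction term is $\mathfrak{so}(1,d)$-valued, hence antisymmetric and invisible in the symmetrized coframe expansion of $L_\xi g$ (so the symmetric part of any covariant Lie derivative of $e$ is automatic), while the Kosmann condition then fixes precisely the antisymmetric part, with torsion-freeness of $\omega^{\mathrm{LC}}$ converting between the $(L_\xi e)_{(db)}$ and $(\dd^{\omega^{\mathrm{LC}}}\iota_\xi e)_{(db)}$ forms. The only differences are ordering and emphasis --- the paper derives the manifestly covariant formula first from an arbitrary lift and then extracts $B^K_\xi$, whereas you pin down $B^K_\xi = -(L_\xi e)_{[ab]}$ in local components first --- so when you invoke existence, make explicit that $\lambda^K_\xi$ is globally defined (since $\dd^{\omega^{\mathrm{LC}}}\iota_\xi e$ is covariant) and hence the pair $(\omega^{\mathrm{LC}}, \lambda^K_\xi)$ meets the hypotheses of Prop.~\ref{EquivalentWaysOfDefiningACovariantLieDerivative}: pointwise antisymmetry of the local $B^K_\xi$ alone does not yet verify the inhomogeneous patching condition on overlaps.
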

\begin{proof}
First, recall that the Lie derivative of the metric $g=\eta_{ab} \,  e^a\otimes e^b$ may be computed in any local trivialization for the coframe field using the naive Lie derivative on the local $1$-form representatives on the right hand side, i.e.,
$$
L_\xi g \, = \, \eta_{ab}\,  L_\xi e^a \otimes e^b \, + \, \eta_{ab} \, e^a \otimes L_\xi e^b \, .
$$
Of course, the point here is that even if the Lie derivative on \textit{individual} coframes fields $e$ is \textit{not} defined globally, the symmetrised expression of the right hand side is a globally defined $(0,2)$-tensor on $\bosonic{X}$ (since the left hand side is globally defined).

Contracting the first entry with elements of the dual frame $\widehat{e} = (\widehat{e}_d)$ we have
$$
\iota_{\widehat{e}_d} (L_\xi g)\, =\, (L_\xi e)_{db}\, e^b + (L_\xi e)_{fd}\, e^f \, = \, 2 \, (L_\xi e)_{(db)} \, e^b \, , 
$$
or equivalently
$$
L_\xi g \, = \, 2 \, (L_\xi e)_{(db)} \, e^d \otimes e^b \, .
$$
This yields one of the expressions for the Kosmann Lie derivative  from \eqref{KosmannCovariantLieDerivativeFormula}, i.e., 
$$
(L^K_\xi e)^a \, = \, \eta^{ad} \,   (L_\xi e)_{(db)} \, e^b \nonumber\, ,
$$
which is hence defined \textit{globally} as a section $L_\xi ^K e \in \Omega^1(\bosonic{X}; \, \FR^{1,d}_{\mathrm{SO}(1,d)})$. 

To witness this more explicitly, notice that in the above expression with symmetrized indices as shown, one may replace the naive Lie derivative with \textit{any} $\mathrm{SO}(1,d)$-covariant Lie derivative. Indeed, consider any such as a covariantization-corrected Lie derivative (Prop. \ref{EquivalentWaysOfDefiningACovariantLieDerivative})
$$
\widetilde{L}_\xi e \, = \, L_\xi e + B_\xi \cdot e \, .
$$
The covariantization term is (individually) only \textit{locally} defined, but crucially valued in $\mathfrak{so}(1,d)$, so that by the antisymmetric properties of the latter Lie algebra we have in \textit{any} local trivialization
$$
(B_\xi \cdot e)_{(db)} \, := \,  (B_\xi \cdot e)_{(d}{\,}^a \, \eta_{|a|b)}\, = \, (B_\xi)^{a}{}_{(d} \, \eta_{|a|b)} \, = \, (B_\xi)_{(ab)} \, = \, 0 \, .
$$
Thus
$$
(L_\xi^K e)^a \, = \, \eta^{ad} \,   (L_\xi e)_{(db)} \, e^b \, = \,  \eta^{ad}\, (\widetilde{L}_\xi e)_{(db)} \, e^b \, ,
$$
for \textit{any} covariant Lie derivative on the right hand side.

For further concreteness, and also computational purposes, there is a natural choice for the latter: the natural covariant Lie derivative (Ex. \ref{LiftingViaChosenConnection}) of the corresponding Levi--Civita connection $\omega^{\mathrm{LC}}= \omega^{\mathrm{LC}}(e)$
$$
L_\xi^{\omega^{\mathrm{LC}}}e \,  = \,  \iota_{\xi}  \dd^{\omega^{\mathrm{LC}}} e + \dd^{\omega^{\mathrm{LC}}} \iota_\xi e\, .
$$
Recalling the defining property of the Levi--Civita connection, i.e., the vanishing torsion $T:=\dd^{\omega^\mathrm{LC}}e = 0$, this further simplifies to
$$
L_\xi^{\omega^{\mathrm{LC}}}e \,  = \, \dd^{\omega^{\mathrm{LC}}} \iota_\xi e\, ,
$$
which identifies the Kosmann covariant Lie derivative of the coframe (Def. \ref{KosmannLift}) via the first formula of \eqref{KosmannCovariantLieDerivativeFormula}, comprised fully of covariant terms
$$
(L^K_\xi e)^a \, := \, \eta^{ad} \,   (\dd^{\omega^{\mathrm{LC}}} \iota_\xi e )_{(db)} \, e^b\, .
$$

Next, to obtain the Kosmann Lie derivative as a covariantized correction of the standard Lie derivative, notice that
$$
(L_\xi^K e)_d - (L^{\omega^{\mathrm{LC}}}_\xi e)_d\,  =\,  - (\dd^{\omega^{\mathrm{LC}}} \iota_\xi e)_{[bd]} e^b    
$$
and so the correction term is given by
\begin{align*}
B^K_\xi \cdot e \, :&= \, L^K_\xi e - L_\xi e = L^K_\xi e -  L^{\omega^{\mathrm{LC}}}_\xi e + \iota_\xi \omega^{\mathrm{LC}}\cdot e\\
&= (\iota_\xi \omega^{\mathrm{LC}}\,- \lambda^K_\xi) \cdot e
\end{align*}
for $(\lambda^K_\xi)^{a}{}_b \, = \, \eta^{ad} \cdot (\dd^{\omega^{\mathrm{LC}}}\iota_\xi e)_{[bd]}$, as claimed in Eq. \eqref{KosmannCovariantLieDerivativeFormulaAsCorrection}, where in the second equality we used the formula the natural covariant derivative of $\omega^{\mathrm{LC}}$ from Ex. \ref{LiftingViaChosenConnection}. Computing the correction $B^K_\xi$ via the second formula of Kosmann Lie derivative \eqref{KosmannCovariantLieDerivativeFormula}, or equivalently substituting for the Levi-Civita connection in terms of the coframe, yields the second form of the correction term 
$$
(B^K_\xi \cdot e)^a = - \eta^{ad}\, (L_\xi e)_{[bd]} \, e^b \, .
$$

Finally, since the coframe field takes values in the fundamental and, in particular, \textit{faithful} representation of $\mathrm{SO}(1,d)$, this completely determines the correction terms $B^K_\xi$ on all local trivializations. Thus by Prop. \ref{EquivalentWaysOfDefiningACovariantLieDerivative}, this determines \textit{the Kosmann lift} of spacetime vector fields to $\mathrm{SO}(1,d)$-invariant vector fields and the corresponding covariant Lie derivatives for fields valued in any representation thereof.
\end{proof}
It is important to note that
the vanishing of the Kosmann Lie derivative of the coframe, viewed as a condition on the given vector field $\xi \in \CX(\bosonic{X})$, is nothing but the \textit{Killing equation} when transported along the isomorphism 
$$
e \, : \, T\bosonic{X} \xrightarrow{\quad \sim \quad} \mathbb{R}^{1,d}_{\mathrm{SO}(1,d)}
$$
given by the chosen coframe.

\begin{corollary}[\bf The vanishing of the Kosmann Lie derivative and the Killing equation]\label{TheVanishingofKosmannLieAndTheKillingEquation}
The ``affine'' version of the Levi-Civita connection $\omega^{\mathrm{LC}}$ defined on the tangent bundle 
may be recovered as
$$
\nabla^{\mathrm{LC}} \xi \, := \, e^{-1} \big( \dd^{\omega^{\mathrm{LC}}} e(\xi) \big) \quad \in \quad \Omega^{1}_\mathrm{dR} \big(\bosonic{X};\, T\bosonic{X} \big) \, , 
$$
with which it is immediate to see that 
\begin{align*}
L_\xi^K e = \, 0 \,  \quad &\iff \quad (\dd^{\omega^{\mathrm{LC}}} \iota_\xi e)_{(db)} \, = \, 0 \\
 &\iff \quad \nabla_{(d}^{\mathrm{LC}}\, \xi_{\, b)}\,  = \, 0 \, .
\end{align*}
\end{corollary}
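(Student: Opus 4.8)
The plan is to extract both equivalences directly from the explicit Kosmann formula of Prop.~\ref{TheKosmannLiftExists}, $(L^K_\xi e)^a = \eta^{ad}\,(\dd^{\omega^{\mathrm{LC}}}\iota_\xi e)_{(db)}\,e^b$, supplemented by the standard identification of $\dd^{\omega^{\mathrm{LC}}}\iota_\xi e$ with the affine covariant derivative of $\xi$. First I would dispatch the first equivalence by a purely algebraic non-degeneracy argument: the inverse Minkowski form $\eta^{ad}$ is invertible and, by the soldering (non-degeneracy) property of the coframe, the family $\{e^b\}$ is a pointwise basis of $1$-forms; hence the $\FR^{1,d}$-valued $1$-form on the right-hand side vanishes identically precisely when every coefficient $(\dd^{\omega^{\mathrm{LC}}}\iota_\xi e)_{(db)}$ does, giving $L^K_\xi e = 0 \Leftrightarrow (\dd^{\omega^{\mathrm{LC}}}\iota_\xi e)_{(db)} = 0$ with no further input.

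For the second equivalence I would first justify the asserted recovery of the tangent-bundle (``affine'') Levi--Civita connection. Abbreviating $\xi^a := (\iota_\xi e)^a$ for the frame components of $\xi$, the covariant exterior derivative of this $\FR^{1,d}$-valued $0$-form unwinds to $(\dd^{\omega^{\mathrm{LC}}}\iota_\xi e)^a = \dd\xi^a + (\omega^{\mathrm{LC}})^a{}_b\,\xi^b$, which is exactly the orthonormal-frame expression of $\nabla^{\mathrm{LC}}\xi$; applying the soldering inverse $e^{-1}$ to the $\FR^{1,d}$-index transports this into $\Omega^1_{\mathrm{dR}}(\bosonic{X};\,T\bosonic{X})$ and reproduces the displayed definition, with frame components $\nabla^{\mathrm{LC}}_b\xi^a = (\dd^{\omega^{\mathrm{LC}}}\iota_\xi e)^a{}_b$. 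Lowering the value-index with $\eta$ gives $(\dd^{\omega^{\mathrm{LC}}}\iota_\xi e)_{db} = \nabla^{\mathrm{LC}}_b\xi_d$, so symmetrizing yields $(\dd^{\omega^{\mathrm{LC}}}\iota_\xi e)_{(db)} = \nabla^{\mathrm{LC}}_{(d}\xi_{b)}$ and the second equivalence follows at once. Recognizing $\nabla^{\mathrm{LC}}_{(d}\xi_{b)} = 0$ as the Killing equation (equivalently $L_\xi g = 0$ transported along $e$) then closes the circle with the characterizing property of Def.~\ref{KosmannLift}.

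I expect the only genuinely delicate point to be this last index bookkeeping through the soldering form --- namely verifying that the $\FR^{1,d}$-index carried by $\dd^{\omega^{\mathrm{LC}}}$ is precisely the frame index that $e^{-1}$ reconverts into a tangent vector, so that after lowering with $\eta$ the symmetrized object $(\dd^{\omega^{\mathrm{LC}}}\iota_\xi e)_{(db)}$ genuinely coincides with $\nabla^{\mathrm{LC}}_{(d}\xi_{b)}$ rather than, say, a transposed or antisymmetrized variant. No torsion input is needed at this stage, since torsion-freeness has already been used in Prop.~\ref{TheKosmannLiftExists} to replace $L_\xi$ by $\dd^{\omega^{\mathrm{LC}}}\iota_\xi$; thus once the index conventions are pinned down, everything reduces to the familiar statement that the symmetrized covariant derivative of $\xi$ vanishes iff $\xi$ is Killing.
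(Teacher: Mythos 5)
Your proposal is correct and follows exactly the route the paper intends: the corollary is stated as ``immediate'' from the explicit formula $(L^K_\xi e)^a = \eta^{ad}\,(\dd^{\omega^{\mathrm{LC}}}\iota_\xi e)_{(db)}\,e^b$ of Prop.~\ref{TheKosmannLiftExists}, and your two steps --- non-degeneracy of $\eta$ and of the coframe basis $\{e^b\}$ for the first equivalence, and the frame-component identification $(\dd^{\omega^{\mathrm{LC}}}\iota_\xi e)^a{}_b = \nabla^{\mathrm{LC}}_b\xi^a$ (with symmetrization rendering any transposition ambiguity harmless) for the second --- are precisely the details being suppressed. Your observation that torsion-freeness enters only through Prop.~\ref{TheKosmannLiftExists} and not again here is also accurate.
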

\begin{remark}[\bf Lifting to $\mathrm{Spin}(1,d)$ covers]\label{LiftingToSpinCovers}
It is a standard fact that $G$-invariant vector fields on principal $G$-bundles lift canonically to any of their covering space principal bundle refinements (see e.g. \cite[Prop. 2.10]{GM03}). Thus the Kosmann lift $
 ( - )^K  : \, \mathcal{X}(\bosonic{X}) \longrightarrow \mathcal{X}(\bosonic{P})
$ canonically extends to a lift to any $\mathrm{Spin}(1,d)$-structure refinement
 $$
\widehat{\bosonic{P}}\longrightarrow \bosonic{P}
 $$
over $\bosonic{X}$.
 
From the point of view of the spacetime formulas for the Kosmann Lie derivative of the associated spinorial fields, this changes nothing apart from the patching of 
 \begin{align*}
B^K_\xi\, :&= \, (\iota_\xi \omega^{\mathrm{LC}} - \lambda_\xi^K) \, ,
\end{align*} 
with $(\lambda^K_\xi)^{a}{}_b \, = \, \eta^{ad} \cdot (\dd^{\omega^{\mathrm{LC}}}\iota_\xi e)_{[bd]}\, ,$ being via the corresponding $\mathrm{Spin}(1,d)$-valued transition functions. This allows then to consistently define the \textit{spinorial Kosmann Lie derivative} on associated spinor fields in either of their forms via
\begin{equation}
  \label{SpinorialLieDerivative}
  \begin{aligned}
  L_\xi^K \psi 
    \,&=\, 
  L_\xi \psi + B_\xi^K \cdot \psi 
   \,=\, 
  L_\xi^{\omega^\mathrm{LC}} \psi - \lambda_\xi^K \cdot \psi \nonumber 
  \\ 
  &= 
    \iota_\xi \dd^{\omega^\mathrm{LC}} \psi 
    + 
    \dd^{\omega^{\mathrm{LC}}} \iota_\xi \psi - \lambda_X^K \cdot \psi 
    \,,
  \end{aligned}
\end{equation}
with $B_\xi^K$ and $\lambda_\xi^K$ now acting via the corresponding $\mathrm{Spin}(1,d)$ representation. In particular, for $\psi_0$ a spinorial $0$-form in a standard $\Gamma$-matrix representation, the above reduces to 
\begin{align*}
L_\xi^K \psi_0 \, &=\,\iota_\xi \dd^{\omega^\mathrm{LC}} \psi_0 - \lambda_\xi^K \cdot \psi_0 \\
&= \, \xi^b (\dd^{\omega^\mathrm{LC}} \psi_0)_b - \tfrac{1}{4}(\dd^{\omega^{\mathrm{LC}}}\iota_\xi e)_{[bd]}\, \Gamma^b \Gamma^d \cdot \psi_0 
\end{align*}
whereby identifying the covariant derivatives with the corresponding affine ones as in Cor. \ref{TheVanishingofKosmannLieAndTheKillingEquation}, this recovers exactly the original formula of Lichnerowicz \cite{Li63} and Kosmann \cite{Ko72}.
\end{remark} 

\begin{remark}[\bf Characterization via (Gauge)-Natural bundles]\label{AnotherGeometricCharacterization}There exists an alternative geometrical characterization of the Kosmann lift when working directly with \textit{classical} $\mathrm{SO}(1,d)$-structures, i.e.,  $\mathrm{SO}(1,d)$-reductions of the full frame $\mathrm{GL}(d+1)$-bundle of the spacetime $\bosonic{X}$. That is, equivalently, when working directly with the metric $g$ and its $\mathrm{SO}(1,d)$-bundle of orthonormal frames 
$
F\bosonic{X}_g \longhookrightarrow F\bosonic{X}
$
(a.k.a. the second order formalism from a field theoretic gravitiational point of view). The details of this approach are laid out in \cite{FFFG96}\cite{GM03}\cite{FF03}, which we do not reproduce here as they are not directly relevant to the coframe formalism applications that we consider. 

A brief outline of this approach, starting from our coframe point of view, is as follows. One first identifies the background $\mathrm{SO}(1,d)$-bundle $\bosonic{P}\rightarrow \bosonic{X}$ with a subbundle of the full frame bundle, by considering the bundle of orthonormal frames of the corresponding metric $g= \eta_{ab}\, e^a\otimes e^b$, that is equivalently, by transferring the former along the isomorphism of the given coframe \eqref{CoframeDef}.\footnote{Since the coframe is an isomorphism of vector bundles $e : T\bosonic{X} \xrightarrow{\sim} \FR^{1,d}_{\mathrm{SO}(1,d)}$, its inverse  induces an (embedding) morphism between the associated $\mathrm{SO}(1,d)$-bundle $\bosonic{P}$ and the associated full $\mathrm{GL}(d+1)$-frame bundle $F\bosonic{X}$ of the tangent bundle, thus identifying the corresponding subbundle of orthonormal frames.} It can then be shown (\cite[Thm 4.12, Cor. 4.15]{GM03}) that for any such fixed reduction
$$\iota_g \, : \, F\bosonic{X}_g \longhookrightarrow \mathrm F\bosonic{X} \, ,
$$
there is a \textit{canonically associated}  splitting of the pullback bundle
$$
\iota_g^*\, T (F\bosonic{X}) \quad \cong \quad  TF\bosonic{X}_g \oplus \mathcal{M} 
$$
over $F \bosonic{X}_g$. 
Finally, using the natural \textit{prolongation lift} $\xi^N$ of any vector field $\xi$ on the base $\bosonic{X}$ to the \textit{full} frame bundle $F \bosonic{X}$ (being an example of a ``natural bundle'' over $\bosonic{X}$ \cite{KMS93}), one can show \cite[Ex. 4.17]{GM03} that the (local formula of the) corresponding Kosmann vector field $\xi^K$ is recovered precisely as the projection onto $T F\bosonic{X}_g$, with respect to the above decomposition
$$
\xi^K \, = \, \xi^N \vert_{TF\bosonic{X}_g} \, .
$$
\end{remark}

\subsubsection{Isometry in the coframe formalism and Kaluza--Klein dimensional reduction}\label{IsometryInTheCoframeFormalismAndKaluzaKleinDimensionalReduction}

The literature on dimensional reduction à la
Kaluza--Klein  of field theories in the coframe formalism of gravity has traditionally employed the vanishing of its naive Lie derivative along certain vector fields as the background symmetry condition. More precisely, the usual demand is that the coordinate basis components of the coframe are independent of the (local) adapted coordinate associated to the vector field generating the symmetry \cite{ScSc79}\cite{Cr81}\cite{DNP86}. Explicitly, say for the case of a trivial $S^1$-bundle spacetime $\widetilde{X} = \bosonic{X}\times S^1$ with coordinate $\theta$ along $S^1$, one requires that
$$
\partial_\theta \, 
e^a_\mu\,  = \,0
$$
where $e= (e^a_\mu \, \dd x^\mu)$ in a local coordinate chart for $\bosonic{X}\times S^1$. It is easy to see that this condition is indeed equivalent to the vanishing of the naive Lie derivative of the coframe along $\partial_\theta \in \CX(\bosonic{X}\times S^1)$
$$
L_{\partial_\theta} e \, = \, 0 \, .
$$
 
The issue is that this is not gauge-covariant statement (cf. Eq. \eqref{NonCovarianceLieDerivativeMatterField}) and so, a priori, does  not make sense globally on non-parallelizable base spacetimes $\bosonic{X}$ and furthermore on non-trivial principal $S^1$-bundles $\widetilde{X}$ over them. Nevertheless, here we make precise how this condition is indeed justified in precisely the situation of dimensional reduction along general \textit{abelian} $G$-fibers, such as the commonly studied cases of a circle $S^1\cong U(1)$ and tori $T^k\cong U(1)^{\times k}$. This is achieved by imposing the appropriate gauge covariant isometry condition as the vanishing of the Kosmann Lie derivative (Def. \ref{KosmannLift}, Cor. \ref{TheVanishingofKosmannLieAndTheKillingEquation}), and then exhibiting the existence of a special gauge in which this reduces to the vanishing of the traditional Lie derivative.
\begin{lemma}[\bf Invariance via Kosmann and naive Lie derivative of coframe]\label{InvarianceViaKosmannAndNaiveLieDerivativeOfCoframe}
Let $\widetilde{X}$ be a manifold with a free, abelian $G$-action, hence defining a principal $G$-bundle $\pi: \widetilde{X}\longrightarrow \bosonic{X}\cong \widetilde{X}/G$, further supplied with an $SO(1,d)$-structure and an associated coframe field $e  :  T\widetilde{X}\xlongrightarrow{\sim} \mathbb{R}^{1,d}_{\mathrm{SO}(1,d)} $ which is (infinitesimally) $G$-invariant via the Kosmann Lie derivative, i.e.,
$$
L_{A^\#}^K e \, = \, 0 
$$
for all fundamental vector fields $A^\# \in \Gamma_{\widetilde{X}}(V\widetilde{X})$ generated by the $G$-action, corresponding to  elements of the Lie algebra $A \in \frg = \mathrm{Lie}(G)$. Assuming the fundamental vector fields are spacelike for the corresponding metric 
$$
g \, = \, \langle  e , e \rangle \, = \, \eta_{ab}\, e^a\otimes e^b \, ,
$$
then:
\begin{itemize}
\item 
There always exists a gauge equivalent field configuration 
\begin{align*}
e \longmapsto \hat{e} \,= \, r \cdot e 
\end{align*}
for some $\mathrm{SO}(1,d)$-gauge transformation $r: \widetilde{X}\rightarrow \mathrm{Ad}(P) $
with respect to the corresponding $\mathrm{SO}(1,d)$-structure $P\rightarrow \widetilde{X}$,  for which the vanishing of the Kosmann Lie derivatives are equivalent to the vanishing of the naive Lie derivatives 
$$
L_{{A^\#}} \hat{e} \,  = \, 0 \, .
$$
\item For $\mathrm{dim}(G)=k$, there exists a gauge sub-equivalence class of coframes with the same property, given by the orbit of $\hat{e}$ under $G$-invariant local $\mathrm{SO}(1,d-k)\times \mathrm{SO}(k)$-transformations with respect to an accordingly reduced $\mathrm{SO}(1,d-k)\times\mathrm{SO}(k)$-structure $\hat{P}\rightarrow \widetilde{X}$.
\end{itemize}
\end{lemma}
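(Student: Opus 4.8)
The plan is to avoid solving directly for the gauge transformation $r$ and instead to \emph{manufacture} a manifestly $G$-invariant orthonormal coframe adapted to the fibration, from which $r$ is read off pointwise. First I would unpack the hypothesis: by the defining property of the Kosmann Lie derivative (Def.~\ref{KosmannLift}, Cor.~\ref{TheVanishingofKosmannLieAndTheKillingEquation}), $L^K_{A^\#}e = 0$ is equivalent to $L_{A^\#}g = 0$, so the fundamental vector fields $A^\#_\alpha$ ($\alpha = 1,\dots,k$) form a family of \emph{commuting} (since $G$ is abelian, $[A^\#_\alpha, A^\#_\beta]=0$) non-null Killing vector fields. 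Writing $L^K_\xi e = L_\xi e + B^K_\xi\cdot e$ with $B^K_\xi$ valued in $\mathfrak{so}(1,d)$ (Prop.~\ref{TheKosmannLiftExists}), the vanishing of the Kosmann derivative exhibits the naive derivative $L_{A^\#}e = -\,B^K_{A^\#}\cdot e$ as a \emph{pure infinitesimal Lorentz rotation} of $e$; the content of the Lemma is that this rotation can be gauged away in a controlled, globally coherent fashion.

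Next I would build the $G$-invariant orthogonal splitting of the tangent bundle. Let $V\widetilde{X} = \mathrm{span}(A^\#_\alpha)$ be the vertical distribution, which is $G$-invariant by commutativity, and let $H := (V\widetilde{X})^{\perp_g}$ be its $g$-orthogonal complement. The non-nullity hypothesis is precisely what guarantees that the vertical distribution is non-degenerate, so that the Gram matrix $G_{\alpha\beta} := g(A^\#_\alpha, A^\#_\beta)$ is invertible and $T\widetilde{X} = V\widetilde{X}\oplus H$ is a genuine $g$-orthogonal, $G$-invariant direct sum. The one-forms $\iota_{A^\#_\alpha}g = g(A^\#_\alpha,-)$ annihilate $H$ and, together with $G_{\alpha\beta}$, are $G$-invariant --- this is the single routine computation, following from $L_{A^\#_\gamma}g = 0$ and $[A^\#_\gamma, A^\#_\alpha]=0$. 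Orthonormalizing via $G^{-1/2}$ (with $G_{\alpha\beta}$ positive-definite for spacelike fibers, hence the residual $\mathrm{SO}(k)$) yields a $G$-invariant orthonormal vertical coframe $\{\hat e^{\,\hat\alpha}\}$ spanning the annihilator of $H$.

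To complete $\hat e$ I would descend the horizontal data to the base: since $H$ is $G$-invariant with $\dd\pi|_H : H \xrightarrow{\ \sim\ }\pi^*T\bosonic{X}$, the restriction $g|_H$ descends to a submersion metric on $\bosonic{X}$, and pulling back any \emph{local} orthonormal coframe on the base produces $G$-invariant, horizontal, orthonormal one-forms $\{\hat e^{\,\hat a}\}$. Assembling $\hat e := (\hat e^{\,\hat a},\hat e^{\,\hat\alpha})$ gives an orthonormal coframe for the \emph{same} metric $g$ with $L_{A^\#_\alpha}\hat e = 0$ by construction; being two orthonormal coframes for $g$, the fields $e$ and $\hat e$ differ pointwise by an element of $\mathrm{SO}(1,d)$, which is the sought section $r$ of $\mathrm{Ad}(P)$ with $\hat e = r\cdot e$. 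In this special gauge $B^K_{A^\#} = -\eta^{ad}(L_{A^\#}\hat e)_{[bd]} = 0$, so $L_{A^\#}\hat e = L^K_{A^\#}\hat e$ and the naive and Kosmann invariance conditions literally coincide, which is the first bullet. For the second bullet, the orthogonal decomposition $\mathbb{R}^{1,d} = \mathbb{R}^{1,d-k}\oplus\mathbb{R}^{k}$ reduces the structure group of the orthonormal frame bundle to the block-diagonal stabilizer $\mathrm{SO}(1,d-k)\times\mathrm{SO}(k)$; for any $G$-invariant $s$ valued in this subgroup one has $L_{A^\#}(s\cdot\hat e) = (L_{A^\#}s)\cdot\hat e + s\cdot L_{A^\#}\hat e = 0$, and conversely these are exactly the residual transformations preserving both the vertical/horizontal block structure and $G$-invariance, so the whole orbit retains the property.

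The main obstacle is the global coherence along the \emph{base}: a single $G$-invariant orthonormal coframe need not exist on all of $\widetilde{X}$, since $\bosonic{X}$ need not be parallelizable, and the preceding construction is therefore only local over base charts. The correct global statement is precisely the reduction of the $\mathrm{SO}(1,d)$-structure to $\mathrm{SO}(1,d-k)\times\mathrm{SO}(k)$, and the technical heart is to verify that the local special gauges $\hat e = r\cdot e$ glue through $G$-invariant transition functions valued in this subgroup, invoking Prop.~\ref{EquivalentWaysOfDefiningACovariantLieDerivative} to transport the lift and its covariant Lie derivative to all associated fields. By contrast, the fiber-direction obstruction that would plague a direct integration of the first-order system $A^\#(r) = r\,B^K_{A^\#}$ along the compact $G$-orbits --- namely a path-ordered-exponential holonomy around each fiber --- is resolved automatically here, because the vertical coframe is assembled from the manifestly $G$-invariant tensors $g(A^\#_\alpha,-)$ and $G_{\alpha\beta}$ rather than integrated, which is exactly what makes the argument valid on topologically non-trivial spacetimes.
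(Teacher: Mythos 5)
Your proposal is correct and follows essentially the same route as the paper's proof: both reduce the Kosmann-invariance hypothesis to $G$-invariance of the metric, form the $g$-orthogonal Kaluza--Klein splitting, manufacture a manifestly $G$-invariant orthonormal coframe --- the vertical part built algebraically from $G$-invariant fiber data (which is where abelianness enters), the horizontal part pulled back from a coframe for the descended base metric --- and then read off $r$ as the unique pointwise $\mathrm{SO}(1,d)$ transformation relating two orthonormal coframes of the same metric, with the second bullet following from the residual block-diagonal $\mathrm{SO}(1,d-k)\times\mathrm{SO}(k)$ freedom. The only cosmetic differences are that the paper orthonormalizes the connection-form components $\{\theta^m\}$ by Gram--Schmidt rather than applying $G^{-1/2}$ to the forms $\iota_{A^\#_\alpha} g = G_{\alpha\beta}\,\theta^\beta$ (Gram--Schmidt additionally preserves orientation, which is exploited later in Cor.~\ref{KK-FieldContentOfG-InvariantCoframe}), and your closing discussion of gluing over a non-parallelizable base is handled in the paper implicitly by taking the base coframe $\hat{e}_h$ to be bundle-valued (an isomorphism onto $\FR^{1,d-k}_{\mathrm{SO}(1,d-k)}$) from the start.
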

\begin{proof}
Consider the metric corresponding to the (equivalence class of the) coframe
$$
g \, := \, \langle e\, ,  e \rangle \, = \, \eta_{ab}\, e^a\otimes e^b \, . 
$$
By construction \eqref{KosmannCondition}, the vanishing of the Kosmann Lie derivative of $e$ is equivalent to the (infinitesimal) $G$-invariance of the metric
$$L_{A^\#} g \, =\, 0 \, . $$
Since this infinitesimal $G$-action is assumed to originate from a finite right action $\rho: \widetilde{X}\times G\rightarrow \widetilde{X}$, it follows that
$$
\rho_{g'}^*\, g \, = \, 0 
$$
for all $g'\in G$.

It is a standard fact that a $G$-invariant metric on a principal $G$-bundle (spacelike along the $G$-fibers) defines a horizontal splitting of its tangent bundle by taking the horizontal vector fields to be those orthogonal to any vertical vector field 
$$
\Gamma(H\widetilde{X}):= \big\{ X_H\in \CX(\widetilde{X}) \, \,  | \, \, g(X_H,X_V)=0 \quad \forall \quad X_V \in \Gamma(V\widetilde{X})\big\}\, .
$$
The $G$-invariance of the metric guarantees that the resulting distribution is $G$-invariant, and so it follows that the metric has a Kaluza--Klein-like decomposition of the form
\begin{align}
g\, = \,\pi^*h+ g^V(\theta,\, \theta)
\end{align}
where $\theta \in \mathrm{Conn}_G(\widetilde{X})$ is the corresponding connection 1-form\footnote{Recall, given a $G$-invariant horizontal distribution on $P$, its connection 1-form is defined via $\theta(X_H)=0$ for all horizontal $X_H\in \Gamma(H\widetilde{X})$ and $\omega(A^\#)=A$ for all fundamental vector fields $A^\# \in \Gamma(V\widetilde{X})$.} $\theta = \theta^m\, E_m$ and $g^V=g^V_{mn} \, T^m\otimes T^n$ is a G-equivariant $(\frg^*\otimes \frg^*)$-valued function\footnote{Hence, via Eq. \eqref{VectorBundleValuedAreFormsTotalSpaceHorizontalForms}, corresponding to a section of the tensor product of the coadjoint bundle $\mathrm{Ad}^*(\widetilde{X})\otimes\mathrm{Ad}^*(\widetilde{X})$ over the dimensionally reduced base spacetime $\bosonic{X}=\widetilde{X}/G$.} (due to the $G$-invariance of $g$ and $G$-equivariance of $\theta$), where $\{E_{m}\}_{m=1,\cdots,k}$ and $\{T^m\}_{m=1,\cdots,k}$ are dual bases of $\frg$ and $\frg^*$ respectively, 
and $h \in \mathrm{Met}(\bosonic{X})$ is a metric on the base so that in particular
$$
L_{A^\#} \pi^* h \, = \, 0 \,. 
$$
Crucially, if $G$ is abelian it follows that the $G$-equivariance of $\theta$ and $g_V$, being via the adjoint and coadjoint representations, also reduce to $G$-invariance, i.e., 

$$
L_{A^\#} \theta \, =\, 0 \quad \quad \mathrm{and} \quad \quad L_{A^\#} 
 g_V \, = \, 0\,  \, .
$$

Applying the Gram--Schmidt (GS) algorithm on the  basis\footnote{For the corresponding sub-module of vertical 1-forms $\Omega^{1}_{\mathrm{Vert}}(\widetilde{X})\hookrightarrow \Omega^1(\widetilde{X})$  spanned by the connection 1-form components.} of $1$-forms given by the (globally defined) components $\{\theta^m\}$ of the  connection (with respect to the dual metric of $g$) 
$$
\mathrm{GS} \, : \, \{\theta^{m}\}_{m=1,\dots,k} \xmapsto{\quad \quad } \{\hat{e}^\alpha_\theta\}_{\alpha=1,\dots,k} 
$$
yields a new basis for  vertical $1$-forms, \textit{orthonormal} w.r.t $g_V$. Moreover, the GS algorithm induces a (fractional) polynomial functional dependence on the $G$-invariant $g_V$ and $\theta$ 
$$
e^1_\theta \, = \, \frac{\theta^1}{ |\theta^1|_g} \quad \quad , \quad \quad e^2_\theta \, = \,\frac{ \big(\theta^2 - \tfrac{g(\theta^2,\theta^1)}{g(\theta^1,\theta^1)}\cdot \theta^1\big)}{ \big|\big(\theta^2 - \tfrac{g(\theta^2,\theta^1)}{g(\theta^1,\theta^1)}\cdot\theta^1\big)\big|_g} \quad \quad , \quad \quad \cdots 
$$
which implies that the new vertical coframe $\hat{e}_\theta=(\hat{e}^0_\theta,\cdots, \hat{e}_\theta^{k})$ is also \textit{$G$-invariant} via the traditional Lie derivative
$$
L_{A^\#} \hat{e}_\theta \, = \, 0\, . 
$$
Choosing further a coframe 
$$
\hat{e}_h=(\hat{e}^0_h,\cdots, \hat{e}_h^{d-k}) \quad : \quad T\bosonic{X}\xrightarrow{\quad \sim \quad } \mathbb{R}^{1,d-k}_{\mathrm{SO}(1,d-k)}
$$ for the base metric $h$, it follows immediately that $\pi^* \hat{e}_h$ is also a (naively) $G$-invariant horizontal coframe  
$$
L_{A^\#} \pi^*\hat{e_h} \, = \, 0\, , 
$$
and so 
\begin{equation}\label{GaugeFixedInvariantCoframe}
\hat{e} \, := \, \big( \pi^*\hat{e}^0_h, \cdots, \, \pi^*\hat{e}^{d-k}_h, \, \hat{e}^0_\theta, \cdots , \, \hat{e}_\theta^k \big) 
\end{equation}
defines a total orthonormal coframe for $g$
\begin{align*}
g\, &=\, \langle \hat{e} \, , \, \hat{e} \rangle \, = \,  \eta_{ab}\, \hat{e}^a\otimes \hat{e}^b \\
&= \, \eta_{AB}  \, \pi^*\hat{e}_h^A\otimes \pi^*\hat{e}_h^B \, + \,  \delta_{\alpha\beta}  \, \hat{e}^\alpha_\theta \otimes \hat{e}^{\beta}_\theta
\end{align*}
whose traditional Lie derivative happens to vanish 
$$
L_{A^\#} \hat{e} \, = \, 0 \, ,
$$
along all fundamental vector fields $A^\#\in \Gamma(V\widetilde{X})$.

Next, any two orthonormal coframes of the same metric are related by a unique $\mathrm{SO}(1,d)$-gauge transformation $r: \widetilde{X}\longrightarrow \mathrm{Ad}(P)$(see e.g. \cite[Lem. 2.7]{GSS-M5Brane})
$$
e  \xmapsto{\quad r \quad } \hat{e} = r\cdot e \, , 
$$
and finally, notice that in the above one may choose \textit{any} $\mathrm{SO}(1,d-k)$-gauge equivalent horizontal coframe $\widetilde{e}_h$ for the base metric $h\in \mathrm{Met}(\bosonic{X})$ and \textit{any} $\mathrm{SO}(d-k)$-gauge equivalent, $G$-invariant, vertical coframe $\tilde{e}_\theta$. This completes the proof.
\end{proof}

The (partially) fixed gauge  for the coframe field $\hat{e}$ from Eq. \eqref{GaugeFixedInvariantCoframe} moreover guarantees that the $G$-invariance of any associated field may also be consistently expressed via the traditional Lie derivative. 
\begin{corollary}
[\bf Invariance of associated fields]\label{InvarianceOfAssociatedFields}
Consider the setup of Lem. \ref{InvarianceViaKosmannAndNaiveLieDerivativeOfCoframe}, supplied further with a $\mathrm{Spin}(1,d)$-lift of the $\mathrm{SO}(1,d)$-structure (Rem. \ref{LiftingToSpinCovers}). For any associated spinorial field $\psi$ and spin connection $\omega$ which are also $G$-symmetric, i.e.,
$$
(L^K_{A^\#}e, \, L^K_{A^\#} \psi,\, L^K_{A^\#} \omega) \,= \, 0 \,    
$$
for all fundamental vector fields $A^\# \in \Gamma(V\widetilde{X})$,
then in the adapted gauge where 
$$
L_{A^\#} \hat{e} \, = \, 0 \, ,
$$
it is also the case that\footnote{Implicitly, here one means the $\mathrm{Spin}(1,d)$-gauge transformation acting on $\psi$ is an arbitrary lift of the $\mathrm{SO}(1,d)$-transformation $r: e\mapsto \hat{e}$ w.r.t. the $\mathrm{Spin}(1,d)$-structure cover.}
$$
L_{A^\#}\hat{\psi} \, = \, 0 \quad \mathrm{and} \quad L_{A^\#}\hat{\omega} \, = \, 0\,  .
$$
\end{corollary}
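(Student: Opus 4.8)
The plan is to exploit the fact that in the adapted gauge produced by Lem.~\ref{InvarianceViaKosmannAndNaiveLieDerivativeOfCoframe} the Kosmann covariantization-correction term $B^K_{A^\#}$ vanishes identically, so that on this trivialization the Kosmann covariant Lie derivative collapses onto the traditional Lie derivative for \emph{every} associated field simultaneously. Granting this, the given hypotheses $L^K_{A^\#}\psi = 0$ and $L^K_{A^\#}\omega = 0$ transfer verbatim to the naive Lie derivatives of the gauge-transformed fields, which is exactly the claim.

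First I would establish the vanishing of the correction term. Recall from Prop.~\ref{TheKosmannLiftExists} that $B^K_{A^\#}$ is a single $\mathfrak{so}(1,d)\cong\mathfrak{spin}(1,d)$-valued $0$-form, determined by the coframe alone through $(B^K_{A^\#})^a{}_b = -\eta^{ad}(L_{A^\#}\widehat e)_{[bd]}$ in the hatted gauge. Since Lem.~\ref{InvarianceViaKosmannAndNaiveLieDerivativeOfCoframe} furnishes precisely the gauge in which $L_{A^\#}\widehat e = 0$, this formula immediately gives $B^K_{A^\#} = 0$ there. Equivalently, feeding $L_{A^\#}\widehat e = 0$ into \eqref{KosmannCovariantLieDerivativeFormula} yields $L^K_{A^\#}\widehat e = 0$, whereupon \eqref{KosmannCovariantLieDerivativeFormulaAsCorrection} forces $B^K_{A^\#}\cdot\widehat e = 0$, and the non-degeneracy (faithfulness) of the coframe gives $B^K_{A^\#} = 0$ --- the very argument closing the proof of Prop.~\ref{TheKosmannLiftExists}.

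Next I would record that the covariantized operation is gauge-covariant, since the Kosmann lift depends only on the metric $g = \langle\widehat e,\widehat e\rangle$, which is unaltered by the $\mathrm{SO}(1,d)$-transformation $r$ (and its chosen $\mathrm{Spin}(1,d)$-lift $\widetilde r$) relating $e$ to $\widehat e$. Thus the \emph{global} sections $L^K_{A^\#}\psi$ and $L^K_{A^\#}\omega$ transform covariantly, $L^K_{A^\#}\widehat\psi = \widetilde r\cdot(L^K_{A^\#}\psi)$ and $L^K_{A^\#}\widehat\omega = \widetilde r\cdot(L^K_{A^\#}\omega)\cdot\widetilde r^{-1}$, which one checks by expanding $L_{A^\#}(\widetilde r\,\psi)$ against the overlap law $B_j = \gamma_{ji}B_i\gamma_{ij} + \gamma_{ji}\iota_{A^\#}\dd\gamma_{ij}$ of Prop.~\ref{EquivalentWaysOfDefiningACovariantLieDerivative}, so that the inhomogeneous pieces cancel. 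Hence the hypotheses $L^K_{A^\#}\psi = 0$ and $L^K_{A^\#}\omega = 0$ give $L^K_{A^\#}\widehat\psi = 0$ and $L^K_{A^\#}\widehat\omega = 0$.

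Finally I would combine the two. In the adapted gauge the spinorial correction formula \eqref{SpinorialLieDerivative} of Rem.~\ref{LiftingToSpinCovers} and the connection formula of Prop.~\ref{EquivalentWaysOfDefiningACovariantLieDerivative} (cf.~\eqref{CovariantLieDerivativerOfConnectionOnBase}) read $L^K_{A^\#}\widehat\psi = L_{A^\#}\widehat\psi + B^K_{A^\#}\cdot\widehat\psi$ and $L^K_{A^\#}\widehat\omega = L_{A^\#}\widehat\omega - \dd^{\widehat\omega}B^K_{A^\#}$, in which $B^K_{A^\#} = 0$ acts as zero in \emph{every} representation (fundamental, spinorial, adjoint). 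Therefore covariant and naive Lie derivatives coincide on this trivialization, and the vanishing $L^K_{A^\#}\widehat\psi = 0$, $L^K_{A^\#}\widehat\omega = 0$ established above yields $L_{A^\#}\widehat\psi = 0$ and $L_{A^\#}\widehat\omega = 0$. The main (if modest) obstacle is purely one of bookkeeping the active gauge transformation: one must transfer the vanishing of the \emph{covariant} Lie derivative from $(\psi,\omega)$ to $(\widehat\psi,\widehat\omega)$ using gauge-covariance, rather than conflating it with the gauge-dependent naive Lie derivative --- and it is exactly the metric-only dependence of the Kosmann lift that licenses this step.
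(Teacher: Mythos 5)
Your proposal is correct and follows essentially the same route as the paper's proof: establish $B^K_{A^\#}=0$ in the adapted gauge (via the correction formula \eqref{KosmannCovariantLieDerivativeFormulaAsCorrection} together with faithfulness of the fundamental representation), use gauge-covariance of the Kosmann Lie derivative to transfer the symmetry hypotheses to the hatted fields, and then conclude that the covariant and naive Lie derivatives coincide on $\hat\psi$ and $\hat\omega$ via Prop.~\ref{EquivalentWaysOfDefiningACovariantLieDerivative}. The only difference is one of emphasis: you spell out the covariance step $L^K_{A^\#}\hat\psi = \widetilde r\cdot L^K_{A^\#}\psi$ explicitly, which the paper leaves implicit in the phrase ``$G$-symmetric spinorial field.''
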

\begin{proof}
By the covariance property of the Kosmann Lie derivative, it follows that in the adapted gauge described in Lem. \ref{InvarianceViaKosmannAndNaiveLieDerivativeOfCoframe} both 
$$
L^K_{A^\#} \hat{e} \, = r\cdot L^K_{A^\#} e =  \, 0 \quad \quad \mathrm{and} \quad \quad  L_{A^\#} \hat{e} \, = \, 0 \, 
$$
vanish. By the covariantization-correction formula \eqref{KosmannCovariantLieDerivativeFormulaAsCorrection} for the Kosmann Lie derivative, this implies that
$$
B^K_{A^\#}\cdot \hat{e} \, = \, 
 L^K_{A^\#}\hat{e} -L_{A^\#} \hat{e}   \, = \, 0
$$
and so by faithfulness of the fundamental $\FR^{1,d}$-representation 
$$ B^K_{A^\#}\, = \, 0 $$
in the given gauge, for all fundamental vector fields $A^\# \in \Gamma(V \widetilde{X})$. Thus, by Prop. \ref{EquivalentWaysOfDefiningACovariantLieDerivative} it follows that for any associated $G$-symmetric spinorial field, in the above gauge,
$$
L_{A^\#} \hat{\psi} \, = \, L^K_{A^\#}\hat{\psi} - B^K_{A^\#}\cdot \hat{\psi} = 0 - 0 = 0 \, . 
$$
Similarly for any $G$-symmetric spin connection in the same gauge
$$
L_{A^\#}\hat{\omega}  = \widetilde{L}^K_{A^\#} \hat{\omega}  + \dd^{\hat{\omega}} B^K_{A^\#} = 0 + 0 = 0 \, .
$$
\end{proof}

\begin{corollary}[\bf Dimensional reduced KK-field content of $G$-invariant coframe]\label{KK-FieldContentOfG-InvariantCoframe}
The partially gauge-fixed $G$-invariant coframe $\hat{e}$ of Lem. \ref{InvarianceViaKosmannAndNaiveLieDerivativeOfCoframe} from Eq. \eqref{GaugeFixedInvariantCoframe} encodes exactly the following fields on the dimensionally reduced spacetime $\bosonic{X}= \widetilde{X} / G$:
\begin{itemize}
\item[\bf (i)] A coframe 
$$\hat{e}_h \, : \, T\bosonic{X} \xrightarrow{\quad \sim \quad} \FR^{1,d-k}_{\mathrm{SO}(1,d-k)} $$
with respect to the induced (abstract) $\mathrm{SO}(1,d-k)$-structure on $\bosonic{X}$,
\item[\bf(ii)] A family of locally defined $G$-gauge fields
$$
\big\{\theta_i \quad  \in \quad  \Omega^1( U_i; \, \mathfrak{g}) \big\}_{i\in I}\, ,
$$
on a trivializing cover  $\{U_i \hookrightarrow \bosonic{X}\}_{i\in I}$ of the $G$-bundle, corresponding to the connection $1$-form $\theta\in \Omega^1(\widetilde{X};\,\frg)$,
\item[\bf (iii)] A globally defined scalar matter field valued in $\frg^*\otimes\FR^k$
$$
\Phi\, : \, \bosonic{X} \xrightarrow{\quad \, \quad } \frg^*\otimes \FR^k \, ,
$$
whose matrix of components $(\Phi^{\alpha}\,_{m})$ w.r.t. any basis for $\frg^*\otimes \FR^k$ has positive determinant $\mathrm{det}(\Phi)\in \FR^+$.
\end{itemize}
Moreover, each of these transforms in the obvious manner under local $G\times\mathrm{SO}(1,d-k)\times\mathrm{SO}(k)$-transformations.
\begin{proof}
The basic coframe content $\hat{e}_h$ of the total coframe $\hat{e}$ \eqref{GaugeFixedInvariantCoframe} is obvious by construction. For the latter two fields, notice that
the vertical coframe constitutes a parallelism of the vertical bundle over $\widetilde{X}\rightarrow \bosonic{X}$ with a trivial $k$-dimensional Euclidean bundle over $\widetilde{X}$
$$
\hat{e}_\theta \, : \, V\widetilde{X} \xrightarrow{\quad \sim \quad }\widetilde{X}\times \FR^{k}\, , 
$$
since each of its components are globally defined as $1$-forms on $\widetilde{X}$, while on the other hand the connection $1$-form yields an isomorphism with a trivial $\frg$-bundle over $\widetilde{X}$
$$
\theta \, : \, V\widetilde{X} \xrightarrow{\quad \sim \quad } \widetilde{X}\times \frg \, .
$$
Each of the vertical coframe components $\hat{e}^a_\theta$ may be expanded in the vertical basis of connection 1-form components
$$
\hat{e}^a_\theta \, = \, \widetilde{\Phi}^{a}\,_{m} \, \theta^m
$$
for some matrix component functions $(\widetilde{\Phi}^a\,_m)$, which are \textit{necessarily} $G$-invariant by the $G$-invariance of both $\theta$ and $\hat{e}_\theta$. Notice furthermore, since the GS-algorithm employed to obtain $\{\hat{e}_\theta^\alpha\}$ from $\{\theta^m\}$ preserves the orientation ($\equiv$ ordering) of the latter, this matrix has positive determinant.

Moreover, these matrix components may be interpreted as a $G$-invariant vector bundle isomorphism 
$$
\widetilde{\Phi} \, : \, \widetilde{X}\times \frg \xrightarrow{\quad \sim \quad} \widetilde{X}\times \FR^k
$$
over $\widetilde{X}$. Equivalently, this is simply a $G$-invariant map valued in $\frg^*\otimes \FR^k$
$$
\widetilde{\Phi}\, : \, \widetilde{X} \xrightarrow{\quad \, \quad} \frg^*\otimes \FR^k \, , 
$$
or yet equivalently, $G$-equivariant with respect to the \textit{trivial} coadjoint action of the abelian $G$ on $\frg^*$ tensored with the trivial action on $\FR^k$. Thus, via Eq. \eqref{VectorBundleValuedAreFormsTotalSpaceHorizontalForms}, this corresponds precisely to a section of the trivial associated bundle $(\frg^*\otimes \FR^k)_G \cong \bosonic{X}\times (\frg^*\otimes \FR^k)$ over the dimensionally reduced base $\bosonic{X}$, i.e., a simply a map 
$$
 \Phi : \, \bosonic{X} \xrightarrow{\quad \, \quad} \frg^*\otimes \FR^k \, .
$$
\end{proof}
\end{corollary}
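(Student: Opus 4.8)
The plan is to read off the three fields directly from the explicit partially gauge-fixed coframe $\hat{e} = (\pi^*\hat{e}_h, \hat{e}_\theta)$ recorded in Eq. \eqref{GaugeFixedInvariantCoframe}, and to identify each block with a globally meaningful object on the base $\bosonic{X} = \widetilde{X}/G$. The coframe content (i) needs nothing beyond construction: the horizontal block $\hat{e}_h$ is by definition a coframe for the base metric $h$ valued in the fundamental representation of $\mathrm{SO}(1,d-k)$, hence \emph{is} the claimed reduced coframe. For the gauge-field content (ii), I would invoke that the $G$-invariant, non-null horizontal distribution produced in the proof of Lem. \ref{InvarianceViaKosmannAndNaiveLieDerivativeOfCoframe} already defines a genuine principal connection $\theta \in \mathrm{Conn}_G(\widetilde{X})$; applying the standard correspondence \eqref{LocalGaugeFieldsAreFormsTotalSpaceVerticalForms} then yields precisely the local gauge fields $\{\theta_i \in \Omega^1(U_i;\mathfrak{g})\}$ with the correct overlap patching.

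The substance lies in extracting the scalar field (iii). The vertical components $\{\hat{e}_\theta^\alpha\}$ and the connection components $\{\theta^m\}$ are two globally defined bases of the same rank-$k$ module $\Omega^1_{\mathrm{Vert}}(\widetilde{X})$ of vertical $1$-forms, so there is a unique change-of-basis matrix of functions $\widetilde{\Phi}^\alpha{}_m : \widetilde{X} \to \FR$ with $\hat{e}_\theta^\alpha = \widetilde{\Phi}^\alpha{}_m\, \theta^m$. I would establish three properties of $\widetilde{\Phi}$: first, $L_{A^\#}\widetilde{\Phi} = 0$, since both $\hat{e}_\theta$ and $\theta$ are $G$-invariant by the abelian reduction in the Lemma; second, $\det(\widetilde{\Phi}) > 0$, because the Gram--Schmidt step building $\hat{e}_\theta$ from $\{\theta^m\}$ is orientation-preserving; and third, reading $\widetilde{\Phi}$ as a $G$-equivariant $(\mathfrak{g}^* \otimes \FR^k)$-valued $0$-form on which $G$ acts trivially. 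The last point lets me apply \eqref{VectorBundleValuedAreFormsTotalSpaceHorizontalForms} to descend $\widetilde{\Phi}$ to a section of the associated bundle, which is \emph{trivial} precisely because the coadjoint action of the abelian $G$ on $\mathfrak{g}^*$ is trivial, i.e. to an honest map $\Phi : \bosonic{X} \to \mathfrak{g}^* \otimes \FR^k$ of positive determinant.

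Finally, I would verify the equivariance claim by tracking the residual symmetries left after the partial gauge fixing: the horizontal $\mathrm{SO}(1,d-k)$ rotates $\hat{e}_h$ and so acts on (i) as a coframe, the vertical $\mathrm{SO}(k)$ rotates the orthonormal vertical frame $\hat{e}_\theta$ and so acts on the $\FR^k$-index of $\Phi$, while the abelian $G$ acts on $\{\theta_i\}$ by the usual inhomogeneous gauge-field transformation and fixes both $\hat{e}_h$ and $\Phi$. The main obstacle I anticipate is precisely the descent argument in (iii): one must use that abelianness collapses the a priori twisted coadjoint bundle $\mathrm{Ad}^*(\widetilde{X}) \otimes \FR^k$ to the trivial bundle $\bosonic{X} \times (\mathfrak{g}^* \otimes \FR^k)$, so that the $G$-invariant $\widetilde{\Phi}$ genuinely descends to a single-valued map on the quotient rather than a merely twisted section; everything else is bookkeeping on the decomposition \eqref{GaugeFixedInvariantCoframe}.
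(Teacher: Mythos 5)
Your proposal is correct and follows essentially the same route as the paper's own proof: reading off $\hat{e}_h$ by construction, identifying $\theta$ with local gauge fields via \eqref{LocalGaugeFieldsAreFormsTotalSpaceVerticalForms}, and extracting $\Phi$ as the $G$-invariant, positive-determinant change-of-basis matrix $\widetilde{\Phi}^\alpha{}_m$ between the two global trivializations $\hat{e}_\theta$ and $\theta$ of $V\widetilde{X}$, which descends to $\bosonic{X}$ precisely because the coadjoint action of the abelian $G$ is trivial. The only additions beyond the paper's argument are your explicit verification of the residual $G\times\mathrm{SO}(1,d-k)\times\mathrm{SO}(k)$ transformation behaviour, which the paper leaves as ``obvious''.
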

In the case of a 1-dimensional fiber $G=U(1)$, the Lie algebra is one-dimensional and  $\frg^*\otimes \FR^1 \cong \FR^1$, so that the matter field $\Phi$ reduces to the prototypical case: the \textit{dilaton} field 
$$
\Phi\, :\, \bosonic{X}\xrightarrow{\quad \quad} \FR^+\hookrightarrow \FR \, , 
$$
which is positive-valued since $\mathrm{det}(\Phi)\equiv \Phi>0$. Of course, the full content of Cor. \ref{KK-FieldContentOfG-InvariantCoframe} recovers precisely the (0-modes of the) Kaluza--Klein reduction of a $G$-invariant metric $g$ via a corresponding partial gauge-fixing of its coframe gauge equivalency class. This is in line with the original literature (see e.g. \cite{ScSc79}\cite{Cr81}\cite{DNP86}), where however only the case of a \textit{trivial torus bundle} $\widetilde{X}=\bosonic{X}\times T^{ k}$ is considered, with no further justification on the usage of the naive Lie derivative and hence the explicit assumed form of the coframe. Our results consistently account both for the vanishing of the naive Lie derivative as a symmetry condition, along with the possible non-parallelizability of the reduced base spacetime $\bosonic{X}$ and the possible non-trivial topology of the $G$-bundle $\widetilde{X}\rightarrow X$ over it. 

\begin{remark}[\bf Reduction along timelike directions] The statements and proofs of Lem. \ref{InvarianceViaKosmannAndNaiveLieDerivativeOfCoframe}, Cor. \ref{InvarianceOfAssociatedFields} and Cor. \ref{KK-FieldContentOfG-InvariantCoframe} hold essentially verbatim in slight more generality for \textit{non-null} fundamental vector fields, i.e., including the case where the abelian $G$-fiber is $1$-dimensional and the fundamental vector field is \textit{timelike}. The only differences at this level are essentially cosmetic, by changing accordingly the signature of the corresponding reduced special orthogonal structure group to be strictly positive.
\end{remark}

\begin{remark}[\bf General non-abelian fibers]The generalization of Lem. \ref{InvarianceViaKosmannAndNaiveLieDerivativeOfCoframe} and Cor. \ref{InvarianceOfAssociatedFields} to the case of a general compact free, but \textit{non-abelian}, $G$-action on a manifold $\widetilde{X}$ supplied with an $\mathrm{SO}(1,d)$-structure and (Kosmann) $G$-invariant coframe is not straightfoward. In more detail, even though a decomposition of the corresponding metric à la Kaluza--Klein   
\begin{align}
g\, = \,\pi^*h+ g^V(\theta,\, \theta)
\end{align}
exists for any group $G$ (cf. proof of Lem. \ref{InvarianceViaKosmannAndNaiveLieDerivativeOfCoframe}), it remains unclear that the vertical component $g^V(\theta,\theta)$ has an orthonormal  coframe decomposition that is \textit{$G$-invariant} via the traditional Lie derivative. This is due to the fact that the connection $\theta$ is not $G$-invariant for a non-abelian group $G$, but is instead $G$-equivariant. Consequently, the GS-algorithm employed in the proof of Lem. \ref{InvarianceViaKosmannAndNaiveLieDerivativeOfCoframe} yields a \textit{$G$-equivariant} orthonormal coframe (with respect to the naive Lie derivative), rather than the $G$-invariant one required.
\end{remark}

\subsubsection*{Lifting the discussion to super-spacetimes}

For any rigorous considerations of dimensional reduction on super-spacetimes, and in particular for our forthcoming goal of dimensionally reducing the $S^4$-flux quantized \textit{super-space} formulation of 11D supergravity \cite{GSS24-SuGra} to a $\mathrm{cyc}(S^4)$-flux quantized super-space formulation of IIA 10D supergravity \cite{GS25b}, it is necessary to properly formulate a \textit{covariant notion} of a \textit{symmetric} supergravity \textit{super-spacetime} background and justify the usage of the naive Lie derivative as a special case. By the rheonomy property of super-space supergravity solutions, however, the demand that the corresponding (covariant) symmetry condition reduces to the vanishing of the Kosmann Lie derivatives upon restriction to the underlying bosonic spacetime already suggests the correct definition.

\begin{definition}[\bf Symmetric super-spacetime background]\label{SymmetricSuperspacetimeBackground}
A \textit{bosonic} vector field $\xi \in \CX^\even(X)$\footnote{Namely, $\CX^\even(X)$ here denotes the space of sections of the even subbundle  
$e^{-1} \, : \, \FR^{1,d}_{\mathrm{SO}(1,d)} \xrightarrow{\quad \sim \quad} T^{\even}X\hookrightarrow TX$ identified by the bosonic component of the super-coframe $(e^s, \psi^s)\, : \, TX\xrightarrow{\quad \sim \quad} \FR^{1,d \vert \mathbf{N}}_{\mathrm{SO}(1,d)}$.} is an (infinitesimal) symmetry of a
super-spacetime (\cite[Def. 2.74]{GSS24-SuGra})\footnote{The condition imposed on the (bosonic part of the super-) Torsion $\dd^{\omega^s} e = \big(\, \overline{\psi^s}\, \Gamma \psi^s\,\big)$ from \cite[Def. 2.74]{GSS24-SuGra} is not necessary for what follows.} 
$$
\big(X, \, (e^s,\psi^s,\om^s)\big)
$$
if the Kosmann Lie derivative along $\xi$ of the gravitational field vanishes
$$
(L_\xi^K e^s, \, L_\xi^K \psi^s, \, L^K_\xi \om^s) \, = \, 0 \, ,
$$
defined (formally) by the same formulas as in the purely bosonic manifold case \eqref{KosmannCovariantLieDerivativeFormulaAsCorrection}. 
\end{definition}
In more detail, this means that the Kosmann Lie derivative on a super-spacetime, along a bosonic vector field, is the covariant Lie derivative defined via the covariantization correction (Prop. \ref{EquivalentWaysOfDefiningACovariantLieDerivative}) given by\footnote{Note although the Levi--Civita connection on a super-spacetime is determined algebraically by the same condition, $\dd e^s + (\omega^{s})^{\mathrm{LC}}\wedge e^s = 0$, it does in general have non-trivial legs along the odd-coframe $(\omega^{s})^{\mathrm{LC}}= (\omega^{s})^{\mathrm{LC}}_a e^{s,a} + (\omega^{s})^{\mathrm{LC}}_\beta \psi^{s,\beta}$.} 
\begin{align*}
(B^K_\xi)^{a}{}_b\, :&= \, (\iota_\xi (\omega^{s})^{\mathrm{LC}} - \lambda_\xi^K)^a{}_b \\
 &= -\big( \eta^{ad}\, (L_\xi e^s)_{[bd]}\big) 
\end{align*} 
with $(\lambda^K_\xi)^{a}{}_b \, = \, \eta^{ad} \, (\dd^{\omega^{\mathrm{LC}}}\iota_\xi e^s)_{[bd]}\, .$ The reason for restricting to bosonic vector fields is that dimensional reduction occurs along \textit{bosonic} (abelian) $G$-fibers. The following result may be seen as a further justification for using the same formula for the Kosmann Lie derivative along such bosonic vector fields in the super-manifold setting, generalizing the corresponding property of the purely bosonic spacetimes from \eqref{KosmannCondition}.

\begin{lemma}[\bf Kosmann condition on super-spacetime]\label{KosmannConditionOnSuperSpacetime}
Consider the $(\even,\even)$ component of the super-metric corresponding to a super-coframe $(e^s,\psi^s)$ 
$$g^{\even} \, := \, \langle e^s,\, e^s\rangle \, \equiv  \eta_{ab} \, e^{s,a}\otimes e^{s,b} \quad \in \quad \Omega^1(X)\otimes_{\mathrm{Sym}} \Omega^1(X)\, . $$
The Kosmann Lie derivative of the bosonic part $e^s$ of a super-coframe $(e^s,\psi^s)$, along a bosonic vector field $\xi \in \CX^\even(X)$ satisfies
\begin{align}\label{SuperspaceKosmannCondition}
L_\xi g^\even \, = 2 \langle L_\xi^K e^s \, ,  e^s \rangle \equiv \, 2\,  \eta_{ab}  \, (L_{\xi}^K e)^{s,a} \, \otimes \, e^{s,b} \, = \,  \,2\,   L_{\xi}^K e^s_b\, \otimes \, e^{s,b}\, . 
\end{align}
In fact, this condition completely determines the Kosmann lift of bosonic vector fields to the corresponding $\mathrm{SO}(1,d)$ principal bundle and further its $\mathrm{Spin}(1,d)$-refinement.
\end{lemma}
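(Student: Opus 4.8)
The plan is to mirror the intrinsic argument of Prop.~\ref{TheKosmannLiftExists}, exploiting that the $(\even,\even)$-component $g^\even = \langle e^s, e^s\rangle$ depends only on the bosonic coframe $e^s$ and that $\xi \in \CX^\even(X)$ is bosonic, so that the computation formally reduces to the purely bosonic case with no Koszul-sign subtleties. First I would take the Kosmann lift as fixed by Def.~\ref{SymmetricSuperspacetimeBackground} through the $\mathfrak{so}(1,d)$-valued correction $B^K_\xi = -\big(\eta^{ad}(L_\xi e^s)_{[bd]}\big)$, so that $L^K_\xi e^s = L_\xi e^s + B^K_\xi \cdot e^s$ as in \eqref{KosmannCovariantLieDerivativeFormulaAsCorrection}. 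The crucial point is to use this naive-Lie-derivative form of the correction rather than the Levi--Civita covariant form of \eqref{KosmannCovariantLieDerivativeFormula}: in the super-setting the super-torsion $T^s = \dd^{\omega^s}e^s = (\overline{\psi^s}\,\Gamma\,\psi^s)$ need not vanish, so the two expressions in \eqref{KosmannCovariantLieDerivativeFormula} no longer coincide, and only the form in terms of $L_\xi e^s$ is manifestly available and connection-independent.

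Second, I would establish the identity \eqref{SuperspaceKosmannCondition} by the cancellation mechanism already underlying the bosonic case. Since $\eta$ is constant and $\mathrm{SO}(1,d)$-invariant, the Leibniz rule for the ordinary Lie derivative gives $L_\xi g^\even = 2\langle L_\xi e^s, e^s\rangle$, while the correction contributes $2\langle B^K_\xi \cdot e^s, e^s\rangle = 0$, because $B^K_\xi$ acts antisymmetrically with respect to the invariant pairing and therefore drops out of the symmetric tensor product; both identities are unaffected by the grading, as $\xi$, $e^s$ and $B^K_\xi$ are all even. Adding these yields $L_\xi g^\even = 2\langle L^K_\xi e^s, e^s\rangle$. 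Equivalently, contracting with the dual frame $\widehat{e}_d$ in the first slot gives $L_\xi g^\even = 2(L^K_\xi e^s)_{(db)}\, e^{s,d}\otimes e^{s,b}$, and the defining choice of $B^K_\xi$ is designed precisely so that the coframe components $(L^K_\xi e^s)_{bd}$ are symmetric; hence the symmetrization is vacuous and the right-hand side collapses to $2\,(L^K_\xi e^s)_b \otimes e^{s,b}$, exactly as displayed in \eqref{SuperspaceKosmannCondition}.

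Third, for the determination statement I would read \eqref{SuperspaceKosmannCondition} backwards. For an arbitrary lift with $\mathfrak{so}(1,d)$-valued correction $B_\xi$, the same computation shows that $L_\xi g^\even$ constrains only the symmetric part $(\widetilde{L}_\xi e^s)_{(db)} = (L_\xi e^s)_{(db)}$, which is automatic. The genuine strengthening in \eqref{SuperspaceKosmannCondition} is the requirement that its right-hand side $2\,(\widetilde{L}_\xi e^s)_b \otimes e^{s,b}$ be symmetric, matching the manifestly symmetric $L_\xi g^\even$; this forces $(\widetilde{L}_\xi e^s)_{[db]} = 0$, hence $(B_\xi)_{db} = -(L_\xi e^s)_{[db]}$, pinning down $B_\xi = B^K_\xi$ uniquely. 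Since $e^s$ is valued in the faithful vector representation $\FR^{1,d}$ of $\mathrm{SO}(1,d)$, Prop.~\ref{EquivalentWaysOfDefiningACovariantLieDerivative} then converts this correction term into a unique $\FR$-linear lift of bosonic vector fields to the $\mathrm{SO}(1,d)$-bundle, which extends canonically to the $\mathrm{Spin}(1,d)$-refinement by Rem.~\ref{LiftingToSpinCovers}.

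The main obstacle is not any lengthy calculation but rather the conceptual check that the super-structure genuinely decouples from this particular identity: one must verify that restricting to bosonic $\xi$ and to the $(\even,\even)$-block $g^\even$ eliminates all dependence on the gravitino $\psi^s$ and on the odd legs of the super Levi--Civita connection (whose presence the footnote to Def.~\ref{SymmetricSuperspacetimeBackground} emphasizes), so that the antisymmetry-cancellation argument runs identically to the bosonic setting. Once this decoupling is granted, the remainder is the same linear-algebra bookkeeping as in Prop.~\ref{TheKosmannLiftExists}.
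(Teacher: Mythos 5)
Your outline agrees with the paper's: both reduce the identity and the determination claim to the bosonic argument of Prop.~\ref{TheKosmannLiftExists} together with Rem.~\ref{LiftingToSpinCovers}, and your Leibniz/antisymmetry cancellation plus the faithfulness argument pinning down $B_\xi$ are exactly the steps the paper invokes ``formally''. The genuine gap is that you explicitly defer --- ``once this decoupling is granted'' --- the one step that does \emph{not} follow formally from the bosonic case, and which is the entire content of the paper's proof: showing that for a bosonic vector field $\xi \in \CX^\even(X)$ the locally defined Lie derivative $L_\xi e^{s,a}$ has \emph{no components along the odd coframe} $\psi^{s,\beta}$. Your dismissal of the grading issues (``$\xi$, $e^s$ and $B^K_\xi$ are all even'') does not give this: a $\ZTwo$-even 1-form may perfectly well contain odd-coframe legs with \emph{odd} coefficient functions, e.g.\ a term $f_\beta\,\psi^{s,\beta}$ with $f_\beta$ odd. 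If such legs were present in $L_\xi e^s$, they would survive in $L^K_\xi e^s$ (since $B^K_\xi$ is built solely from the bosonic components $(L_\xi e^s)_{[bd]}$ and only shifts bosonic legs); the left-hand side $L_\xi g^\even$ would then contain symmetric cross-terms proportional to $\psi^{s,\beta}\otimes e^{s,b} + e^{s,b}\otimes \psi^{s,\beta}$, while the right-hand side of \eqref{SuperspaceKosmannCondition} --- which must be read as a plain, non-symmetrized tensor for the condition to ``completely determine'' the lift, just as in the bosonic Def.~\ref{KosmannLift} --- would contain $2\,\psi^{s,\beta}\otimes e^{s,b}$. The asserted equality, your ``collapse of the symmetrization'' step, and the uniqueness argument of your third paragraph would all fail. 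So the decoupling is not a routine ``conceptual check'' to be granted; it \emph{is} the lemma.

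The paper closes precisely this point: expand $\xi = \xi^a\,\hat{e}^s{}_a$ in the frame dual to the bosonic coframe; argue that $[\hat{e}^s{}_a,\hat{e}^s{}_b]$, hence $L_\xi \hat{e}^s{}_b = \xi^a[\hat{e}^s{}_a,\hat{e}^s{}_b] - \hat{e}^s{}_b(\xi^a)\,\hat{e}^s{}_a$, again lies in the even distribution, by the even parity of the Lie bracket of even vector fields (involutivity), so that $[\hat{e}^s{}_a,\hat{e}^s{}_b] = C^c{}_{ab}\,\hat{e}^s{}_c$ for structure functions $C^c{}_{ab}$; then dualize through $e^{s,d}(\hat{e}^s{}_b) = \delta^d{}_b$ to obtain
$$
L_\xi e^{s,d} \, = \, \big(-\xi^a\, C^d{}_{ab} + \hat{e}^s{}_b(\xi^d)\big)\, e^{s,b}\, ,
$$
which manifestly has only bosonic coframe components. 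This short computation is what your proposal is missing; with it in place, the rest of your argument runs as in the bosonic case. A secondary inaccuracy: your reason for preferring the $(L_\xi e^s)_{[bd]}$-form of $B^K_\xi$ --- that the super-torsion $\dd^{\omega^s} e^s = (\overline{\psi^s}\,\Gamma\,\psi^s)$ is non-zero --- conflates the physical spin connection $\omega^s$ with the super Levi--Civita connection; the footnote to Def.~\ref{SymmetricSuperspacetimeBackground} stresses that the latter is still fixed by the torsion-free condition $\dd e^s + (\omega^s)^{\mathrm{LC}} \wedge e^s = 0$, so the two expressions for $B^K_\xi$ remain equal in the super setting, and the paper indeed states them as equal.
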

\begin{proof}
Given our restriction to bosonic vector fields and the $(\even,\even)$ part of the super-metric, this follows formally as the proof of Prop. \ref{TheKosmannLiftExists} (and Rem. \ref{LiftingToSpinCovers}) apart from one point that requires justification. Namely, a $\mathrm{SO}(1,d)$ covariant Lie derivative of the coframe along an arbitrary vector field is given by
$$
\widetilde{L}_\xi \, = \, L_\xi e^s + B_\xi \cdot e^s \, , 
$$
and so it is (a priori) plausible that the traditional Lie derivative term $L_\xi e^s$ might have non-trivial components along the odd coframe $\psi^s$, and hence the Lie derivative of the corresponding even metric $L_\xi g^\even$ too. However, this is impossible for \textit{bosonic} vector fields $\xi \in \CX^\even(X)$, since this distribution is involutive on any supermanifold $X$ (by the even-parity of the Lie bracket). 

The details of this latter argument are standard but we include these for completeness. Let $(\hat{e}^s{}_a)_{a=0,\cdots, d}$ be the dual frame to the bosonic coframe $e^s=(e^{s,a})_{a=0,\cdots, d}$, so that any bosonic vector field expands as $
\xi \, = \, \xi^a \hat{e}^s{}_a.
$ It follows that the (locally defined) traditional Lie derivative of the frame is given by
\begin{align*}
L_\xi \hat{e}^s{}_b\, &= \, [\xi^a \hat{e}^s{}_a,\hat{e}^s{}_b] \, = \, \xi^a [\hat{e}^s{}_a, \hat{e}^s{}_b] - \hat{e}^s{}_b(\xi^a)\,  \hat{e}^s{}_a \\
&=\, \xi^a C^c{}_{ab}\,  \hat{e}^s{}_c - \hat{e}^s{}_b(\xi^a)\,  \hat{e}^s{}_a \quad \qquad \in \quad \qquad \CX^\even(X)\, ,
\end{align*}
for some structure functions $\{C^c{}_{ab}\}_{a,b,c=0,\cdots,d}\subset C^\infty(X)$, since the Lie bracket of two even vector fields is an even vector field. Acting with the Lie derivative on the defining duality condition $e^{s,d}(\hat{e}^s{}_b)= \delta^{d}{}_b$ yields
$$
L_\xi e^{s,d} \, = \, \big(-\xi^a \, C^{d}{}_{ab} + \hat{e}^s{}_b(\xi^d)\big) e^{s,b} \, ,
$$
which manifestly has only bosonic coframe components. 
\end{proof}

\begin{remark}[\bf Lifting odd vector fields]
We stress that the above Kosmann lift and corresponding covariant Lie derivatives are only \textit{partially defined}, namely only for \textit{bosonic} vector fields on the super-spacetime. Naturally, given that Lem. \ref{KosmannConditionOnSuperSpacetime} fully determines the lift by a condition on the \textit{even} component of the super-metric
\begin{align*}
g \, &= \, g^{\even} + g^\odd \, := \, \langle e^s,\, e^s\rangle + \langle \psi^s,\, \psi^s \rangle 
\\
&\equiv \,   \eta_{ab} \, e^{s,a}\otimes e^{s,b} + \eta_{\alpha \beta} \, \psi^{s,\alpha} \otimes \psi^{s, \beta}\, , 
\end{align*}
where $\eta_{\alpha \beta}= -\eta_{\beta \alpha}$ are the components of the (flat) ``symplectic metric'', we expect that the lift on odd vector fields $\CX^\odd(X)$ may be determined by the further (separate) demand that $$
L_{\xi^\odd} g^\odd \, = \, 2\,  \eta_{\alpha \beta} \, L_{\xi^\odd}^K \psi^{s,\alpha} \otimes \psi^{s,\beta}\, .
$$ 
Although this is an interesting question worth further study, we do not pursue it here as it is outside the scope of our motivation of dimensional reduction, which takes place along \textit{bosonic fibers}.
\end{remark}

With Def. \ref{SymmetricSuperspacetimeBackground} and Lem. \ref{KosmannConditionOnSuperSpacetime} established on super-spacetimes, the contents of Lem. \ref{InvarianceViaKosmannAndNaiveLieDerivativeOfCoframe}, Cor. \ref{InvarianceOfAssociatedFields} and Cor. \ref{KK-FieldContentOfG-InvariantCoframe} from symmetric bosonic spacetimes
follow formally in the same manner for symmetric super-spacetimes. We summarize these in the following.

\begin{corollary}[\bf Invariance via the naive Lie derivative on super-spacetime]\label{InvarianceViaTheNaiveLieDerivativeOnSuper-spacetime}
Let $\widetilde{X}$ be a super-principal $G$-bundle $\pi: \widetilde{X}\longrightarrow X$, with $G$ \textit{bosonic} and \textit{abelian}, supplied with a $G$-symmetric super-spacetime structure $(e^s, \psi^s,\omega^s)$, i.e.,   
$$
(L_{A^\#}^K e^s, \, L_{A^\#}^K \psi^s, \, L^K_{A^\#} \om^s) \, = \, 0 \, ,
$$
for all fundamental vector fields $A^\# \in \Gamma_{\widetilde{X}}(V\widetilde{X})$ generated by the $G$-action, corresponding to  elements of the Lie algebra $A \in \frg = \mathrm{Lie}(G)$. Assuming the fundamental vector fields are spacelike for the corresponding $\even$ metric 
$$
g^\even \, = \, \langle  e^s , e^s \rangle \, = \, \eta_{ab}\, e^{s,a}\otimes e^{s,b} \, ,
$$
then:
\begin{itemize}
\item 
There always exists a gauge equivalent super-spacetime field configuration 
\begin{align*}
(e^s, \psi^s,\omega^s) \longmapsto (\hat{e}^s, \hat{\psi}^s,\hat{\omega}^s) \,= \, r \cdot (e^s, \psi^s,\omega^s)
\end{align*}
for some $\mathrm{Spin}(1,d)$-gauge transformation $r: \widetilde{X}\rightarrow \mathrm{Ad}(\widetilde{P}) $
with respect to the corresponding $\mathrm{Spin}(1,d)$-structure $\widetilde{P}\rightarrow \widetilde{X}$,  for which the vanishing of the Kosmann Lie derivatives are equivalent to the vanishing of the naive Lie derivatives 
$$
(L_{A^\#} \hat{e}^s, \, L_{A^\#} \hat{\psi}^s, \, L_{A^\#} \hat{\om}^s) \, = \, 0 \, .
$$

\item For $\mathrm{dim}(G)=k$, there exists a gauge sub-equivalence super-spacetime field configurations with the same property, given by the orbit of $(\hat{e}^s,\hat{\psi}^s, \hat{\omega}^{s})$ under $G$-invariant local $\mathrm{Spin}(1,d-k)\times \mathrm{Spin}(k)$-transformations with respect to an accordingly reduced $\mathrm{Spin}(1,d-k)\times\mathrm{Spin}(k)$-structure $\hat{P}\rightarrow \widetilde{X}$.

\item The partially gauge fixed $G$-invariant bosonic coframe $\hat{e}^s$ encodes precisely the following fields on the dimensionally reduced super-spacetime $X = \widetilde{X}/G$:
\begin{itemize}

\item[\bf (i)] a bosonic coframe $\hat{e}^s_h: TX \longrightarrow \FR^{1,d-k}_{\mathrm{SO}(1,d-k)}$, 
\item[\bf (ii)] A family of locally defined (bosonic) $G$-gauge fields $
\big\{\theta^s_i \quad  \in \quad  \Omega^1( U_i; \, \mathfrak{g}) \big\}_{i\in I}\, ,
$
on a trivializing (super)-cover  $\{U_i \hookrightarrow X\}_{i\in I}$, 
\item[\bf (iii)] A globally defined (bosonic) scalar matter field $\Phi^s: X \longrightarrow \frg^* \otimes \FR^k$.
\end{itemize}
\end{itemize}
\end{corollary}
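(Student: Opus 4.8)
The plan is to reduce the entire statement to the already-established bosonic results of Lem.~\ref{InvarianceViaKosmannAndNaiveLieDerivativeOfCoframe}, Cor.~\ref{InvarianceOfAssociatedFields} and Cor.~\ref{KK-FieldContentOfG-InvariantCoframe}, with the bridge being Lem.~\ref{KosmannConditionOnSuperSpacetime}. The essential observation is that, since $G$ is assumed \emph{bosonic}, every fundamental vector field $A^\# \in \Gamma_{\widetilde{X}}(V\widetilde{X})$ is an \emph{even} vector field, $A^\# \in \CX^\even(\widetilde{X})$. Thus the Kosmann condition along $A^\#$ falls precisely within the scope of Lem.~\ref{KosmannConditionOnSuperSpacetime}, which guarantees both that the Kosmann lift is well-defined on such vector fields and that, exactly as in \eqref{SuperspaceKosmannCondition},
$$
L^K_{A^\#} e^s = 0 \quad \Longleftrightarrow \quad L_{A^\#} g^\even = 0 \,,
$$
where $g^\even = \langle e^s, e^s\rangle$ is the even part of the super-metric. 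First I would record this equivalence for all $A^\#$, which reduces the $G$-symmetry of the super-coframe to the honest $G$-invariance of a bosonic metric.

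With this in hand, the first bullet is obtained by running the argument of Lem.~\ref{InvarianceViaKosmannAndNaiveLieDerivativeOfCoframe} verbatim, now for the bosonic coframe $e^s$ and the \emph{even} metric $g^\even$ in place of $e$ and $g$. Concretely, $G$-invariance of $g^\even$ together with the non-null and abelian hypotheses produces the Kaluza--Klein splitting $g^\even = \pi^* h + g^V(\theta,\theta)$ and, via the Gram--Schmidt gauge-fixing on the (abelian, hence $G$-\emph{invariant}) connection components $\{\theta^m\}$, a naively $G$-invariant orthonormal bosonic coframe $\hat{e}^s$ with $L_{A^\#}\hat{e}^s = 0$. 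The relating transformation $r$ is the unique $\mathrm{SO}(1,d)$-gauge transformation carrying $e^s$ to $\hat{e}^s$, lifted to the given $\mathrm{Spin}(1,d)$-structure $\widetilde{P}\rightarrow \widetilde{X}$; it acts on the entire super-coframe $(e^s,\psi^s,\om^s)$. In this adapted gauge the covariantization correction $B^K_{A^\#}$ vanishes by faithfulness of the fundamental $\FR^{1,d}$-representation, exactly as in the proof of Cor.~\ref{InvarianceOfAssociatedFields}.

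The passage to the associated fields $\hat\psi^s$ and $\hat\om^s$ is then a direct transcription of Cor.~\ref{InvarianceOfAssociatedFields}: using $B^K_{A^\#}=0$ together with the assumed $G$-symmetry $L^K_{A^\#}\psi^s = L^K_{A^\#}\om^s = 0$ and the covariance of the Kosmann Lie derivative under $r$, the covariantization-correction formulas of Prop.~\ref{EquivalentWaysOfDefiningACovariantLieDerivative} give $L_{A^\#}\hat\psi^s = L^K_{A^\#}\hat\psi^s - B^K_{A^\#}\cdot \hat\psi^s = 0$ and $L_{A^\#}\hat\om^s = \widetilde{L}^K_{A^\#}\hat\om^s + \dd^{\hat\om^s} B^K_{A^\#} = 0$, establishing the first bullet. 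The second bullet (the residual $G$-invariant $\mathrm{Spin}(1,d-k)\times\mathrm{Spin}(k)$ gauge freedom) and the third bullet (the decomposition of $\hat{e}^s$ into the reduced bosonic coframe $\hat{e}^s_h$, the local $G$-gauge fields $\theta^s_i$, and the bosonic scalar $\Phi^s\colon X\to \frg^*\otimes\FR^k$) then follow formally from the corresponding clauses of Lem.~\ref{InvarianceViaKosmannAndNaiveLieDerivativeOfCoframe} and Cor.~\ref{KK-FieldContentOfG-InvariantCoframe}, since every construction there operates purely on the bosonic coframe and its even metric.

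The step I expect to be the genuine obstacle --- and the only one not automatic from the bosonic case --- is precisely the one already isolated and dispatched in Lem.~\ref{KosmannConditionOnSuperSpacetime}: a priori the traditional Lie derivative $L_{A^\#}e^s$ could acquire components along the odd coframe $\psi^s$, which would spoil both the equivalence displayed above and the clean Kaluza--Klein splitting of $g^\even$. That this cannot happen rests on the involutivity of the even tangent distribution (the Lie bracket of even vector fields being even), which confines $L_{A^\#}e^s$ to the bosonic coframe. Once this containment is invoked, the super-case carries no further geometric content beyond the bosonic one, and the remainder of the proof is a citation of the established lemmas.
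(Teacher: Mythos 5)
Your proposal is correct and follows essentially the same route as the paper: the paper's own proof likewise reduces the statement to Lem.~\ref{InvarianceViaKosmannAndNaiveLieDerivativeOfCoframe}, Cor.~\ref{InvarianceOfAssociatedFields} and Cor.~\ref{KK-FieldContentOfG-InvariantCoframe}, parsed in the category of supermanifolds by restricting to the even tangent bundle and the even metric $g^\even$, which is exactly the reduction you organize via Lem.~\ref{KosmannConditionOnSuperSpacetime}. The one point you gloss over is that the Gram--Schmidt step is not literally verbatim on a supermanifold --- the paper invokes the appropriately modified super Gram--Schmidt process (citing DeWitt) --- but this is a cosmetic citation rather than a genuine gap, since the non-null hypothesis guarantees the required invertibility.
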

\begin{proof}
The proof follows as those of Lem. \ref{InvarianceViaKosmannAndNaiveLieDerivativeOfCoframe}, Cor. \ref{InvarianceOfAssociatedFields} and Cor. \ref{KK-FieldContentOfG-InvariantCoframe} by parsing them in the category of super manifolds with minimal (cosmetic) modifications, such as restricting to the even tangent bundle $T^\even X\hookrightarrow TX$ and the corresponding even metric $g^\even$, while performing an appropriately modified version of the Gram--Schmidt process on supermanifolds \cite[Sec. 2.8]{DeW84}. Thus we shall refrain from repeating the full proofs here.
\end{proof}

\section{Summary \& Outlook}
We have provided a concise, rigorous and complete mathematical description of covariant Lie derivatives of $G$-gauge fields and associated matter fields, which previously existed only in fragmentary form. This was done from both the perspective of lifts to the corresponding principal $G$-bundles (Def. \ref{CovariantLieDerivative}) and also from that of the base spacetime via covariantization correction formulas (Lem. \ref{CovariantLieDerivativeOnThebAseViaAConnection}), while exhibiting the precise equivalence between the two approaches (Prop. \ref{EquivalentWaysOfDefiningACovariantLieDerivative}).

We have detailed two occurences of covariant Lie derivatives in the context of first-order (super-)gravity: {\bf (i)} the natural covariant Lie derivative (Ex. \ref{LiftingViaChosenConnection}), employed to fully explain the relation of on-shell (super-)diffeomorphism symmetry to local translational (super-)symmetry (Sec. \ref{SuperDiffeomorphismSymmetrySec}), and {\bf (ii)} the Kosmann Lie derivative (Def. \ref{KosmannLift}, Prop. \ref{TheKosmannLiftExists}), appropriate for acting on coframes and associated spinorial fields towards the description of background isometries via its vanishing condition (Cor. \ref{TheVanishingofKosmannLieAndTheKillingEquation}).

Finally, we have employed the \textit{covariant} vanishing symmetry condition of the latter to exhibit a special gauge where it coincides with the vanishing of the traditional \textit{non-covariant} Lie derivative (Lem. \ref{InvarianceViaKosmannAndNaiveLieDerivativeOfCoframe}, Cor. \ref{InvarianceOfAssociatedFields}), for the case of abelian $G$-symmetries on a principal $G$-bundle spacetime. As a further by-product, we have detailed the dimensionally reduced field content encoded in this partially fixed-gauge total spacetime coframe, consistently applicable to any principal $G$-bundle topology on the total spacetime and any topology on the dimensional reduced spacetime (Cor. \ref{KK-FieldContentOfG-InvariantCoframe}).

The latter result is crucial for guaranteeing the non-triviality of our forthcoming application in the dimensional reduction of $S^4$-flux quantized 11D super-space supergravity \cite{GSS24-SuGra} to $\mathrm{cyc}(S^4)$-flux quantized IIA 10D super-space supergravity \cite{GS25b}, and further potential toroidal reductions thereof, since flux quantization yields new non-trivial information exclusively on spacetimes of non-trivial topology.

\vspace{1cm} 
\noindent {\bf Acknowledgements} \,  
The author is thankful to Urs Schreiber for comments on an earlier draft of this text.

\vspace{1cm}
\noindent {\bf Data availability} \, All data generated or analyzed during this study are contained in this document.

\vspace{1cm}
\noindent {\bf Conflict of interest} \, The author states that there is no conflict of
interest.
\newpage

\medskip
\medskip

\end{document}